\begin{document}
\title{The conditional Entropy Power Inequality for bosonic quantum systems}
\author{Giacomo De Palma\inst{1} \and Dario Trevisan\inst{2}}
\institute{QMATH, Department of Mathematical Sciences, University of Copenhagen, Universitetsparken 5, 2100 Copenhagen, Denmark \and Universit\`a degli Studi di Pisa, I-56126 Pisa, Italy}
\date{}
\maketitle
\begin{abstract}
We prove the conditional Entropy Power Inequality for Gaussian quantum systems.
This fundamental inequality determines the minimum quantum conditional von Neumann entropy of the output of the beam-splitter or of the squeezing among all the input states where the two inputs are conditionally independent given the memory and have given quantum conditional entropies.
We also prove that, for any couple of values of the quantum conditional entropies of the two inputs, the minimum of the quantum conditional entropy of the output given by the conditional Entropy Power Inequality is asymptotically achieved by a suitable sequence of quantum Gaussian input states.
Our proof of the conditional Entropy Power Inequality is based on a new Stam inequality for the quantum conditional Fisher information and on the determination of the universal asymptotic behaviour of the quantum conditional entropy under the heat semigroup evolution.
The beam-splitter and the squeezing are the central elements of quantum optics, and can model the attenuation, the amplification and the noise of electromagnetic signals.
This conditional Entropy Power Inequality will have a strong impact in quantum information and quantum cryptography.
Among its many possible applications there is the proof of a new uncertainty relation for the conditional Wehrl entropy.
\end{abstract}

\section{Introduction}
The Shannon differential entropy \cite{cover2006elements} of a random variable $X$ with values in $\mathbb{R}^k$ and probability density $p_X(\mathbf{x})\mathrm{d}^kx$ is
\begin{equation}
S(X) := -\int_{\mathbb{R}^k}\ln p_X(\mathbf{x})\;\mathrm{d}p_X(\mathbf{x})\;,
\end{equation}
and quantifies the noise or the information contained in $X$.
Let us consider the linear combination
\begin{equation}\label{eq:linc}
C := \sqrt{\eta}\,A + \sqrt{|1-\eta|}\,B\;,\qquad \eta\ge0
\end{equation}
of two independent random variables $A$ and $B$ with values in $\mathbb{R}^k$.
The classical Entropy Power Inequality \cite{dembo1991information,stam1959some,shannon2001mathematical} states that, if $A$ and $B$ have Shannon differential entropy fixed to the values $S(A)$ and $S(B)$, respectively, the Shannon differential entropy of $C$ is minimized when $A$ and $B$ have a Gaussian probability distribution with proportional covariance matrices:
\begin{equation}\label{eq:cEPI}
\exp\frac{2S(C)}{k} \ge \eta\exp\frac{2S(A)}{k} + \left|1-\eta\right|\exp\frac{2S(B)}{k}\;,
\end{equation}
and is a fundamental element of classical information theory \cite{cover2006elements}.

The noncommutative counterpart of probability measures are quantum states, that are linear positive operators on a Hilbert space with unit trace.
The counterpart of the probability measures on $\mathbb{R}^k$ with even $k$ are the quantum states of a Gaussian quantum system with $n=\frac{k}{2}$ modes.
Gaussian quantum systems \cite{holevo2013quantum,holevo2015gaussian} model electromagnetic waves in the quantum regime.
Electromagnetic waves traveling through cables or free space provide the most promising platform for quantum communication and quantum key distribution \cite{weedbrook2012gaussian}.
Gaussian quantum systems then play a key role in quantum communication and quantum cryptography, and provide the model to determine the maximum communication and key distribution rates achievable in principle by quantum communication devices.
The noncommutative counterpart of the linear combination \eqref{eq:linc} is the beam-splitter of transmissivity $0\le\eta\le1$ or the squeezing of parameter $\eta\ge1$.
The beam-splitter and the squeezing are the fundamental elements of quantum optics, and can model the attenuation, the amplification and the noise of electromagnetic signals.

The quantum counterpart of the Shannon differential entropy is the von Neumann entropy of a quantum state \cite{holevo2013quantum,wilde2017quantum}
\begin{equation}
S(\hat{\rho}) := -\mathrm{Tr}\left[\hat{\rho}\ln\hat{\rho}\right]\;.
\end{equation}
In this paper we prove the conditional Entropy Power Inequality for Gaussian quantum systems (\autoref{thm:EPI}).
Let $A$ and $B$ be the $n$-mode Gaussian quantum systems at the input of the beam-splitter of transmissivity $0\le\eta\le1$ or of the squeezing of parameter $\eta\ge1$, and let $C$ be the $n$-mode Gaussian quantum system at the output.
Let us consider a joint quantum input state $\hat{\rho}_{ABM}$ such that $A$ and $B$ are conditionally independent given the memory system $M$.
This condition is expressed with the vanishing of the quantum conditional mutual information:
\begin{equation}
I(A:B|M) := S(A|M) + S(B|M) - S(AB|M) = 0\;,
\end{equation}
where
\begin{equation}
S(X|M):=S(XM)-S(M)
\end{equation}
is the quantum conditional entropy.
The conditional Entropy Power Inequality determines the minimum quantum conditional entropy of the output $S(C|M)$ among all the quantum input states $\hat{\rho}_{ABM}$ as above and with given quantum conditional entropies $S(A|M)$ and $S(B|M)$:
\begin{equation}\label{eq:EPIm}
\exp\frac{S(C|M)}{n} \ge \eta\exp\frac{S(A|M)}{n} + \left|1-\eta\right|\exp\frac{S(B|M)}{n}\;.
\end{equation}
We also prove that, for any couple of values of $S(A|M)$ and $S(B|M)$, the minimum \eqref{eq:EPIm} for $S(C|M)$ is asymptotically achieved by a suitable sequence of quantum Gaussian input states (\autoref{thm:sharp}).

The conditional Entropy Power Inequality \eqref{eq:EPIm} had been conjectured in \cite{koenig2015conditional}.
It is the conditional version of the quantum Entropy Power Inequality \cite{de2015multimode,de2014generalization,konig2014entropy,konig2016corrections} that provides a lower bound to the von Neumann entropy $S(C)$ of the output of the beam-splitter or of the squeezing for all the product input states $\hat{\rho}_{AB}=\hat{\rho}_A\otimes\hat{\rho}_B$ in terms of the entropies of the inputs $S(A)$ and $S(B)$:
\begin{equation}\label{eq:qEPI}
\exp\frac{S(C)}{n} \ge \eta\exp\frac{S(A)}{n} + \left|1-\eta\right|\exp\frac{S(B)}{n}\;.
\end{equation}
Contrarily to the classical Entropy Power Inequality \eqref{eq:cEPI}, the quantum Entropy Power Inequality \eqref{eq:qEPI} is not saturated by quantum Gaussian states with proportional covariance matrices, unless they have the same entropy \cite{de2014generalization}.

In the classical scenario, the conditional Entropy Power Inequality reads
\begin{equation}\label{eq:cCEPI}
\exp\frac{2S(C|M)}{k} \ge \eta\exp\frac{2S(A|M)}{k} + \left|1-\eta\right|\exp\frac{2S(B|M)}{k}\;,
\end{equation}
where $A$ and $B$ are random variables with values in $\mathbb{R}^k$ and are conditionally independent given the random variable $M$, and $C$ is as in \eqref{eq:linc}.
The conditional Entropy Power Inequality \eqref{eq:cCEPI} is an easy consequence of its unconditioned version \eqref{eq:cEPI}, because the conditional entropy $S(X|M)$ coincides with the expectation value with respect to $M$ of the entropy of $X$ given the value of $M$ (see \autoref{app:cCEPI}):
\begin{equation}\label{eq:X|M}
S(X|M) = \int_M S(X|M=m)\,\mathrm{d}p_M(m)\;.
\end{equation}
The classical conditional Entropy Power Inequality \eqref{eq:cCEPI} is saturated by any joint probability measure on $ABM$ such that, conditioning on any value $m$ of $M$, $A$ and $B$ are independent Gaussian random variables with proportional covariance matrices, and the proportionality constant does not depend on $m$.

In the quantum scenario, conditioning on the value of $M$ is not possible in the presence of entanglement between $AB$ and $M$, and the conditional Entropy Power Inequality is not an easy consequence of the unconditioned Entropy Power Inequality.
The saturation conditions are another fundamental difference between the classical and the quantum scenario.
The quantum conditional Entropy Power Inequality can be saturated only asymptotically by a suitable sequence of quantum Gaussian states.
Contrarily to the classical scenario, the correlation of the inputs $A$ and $B$ with the memory $M$ is necessary for the saturation of the inequality.
Indeed, the unconditioned quantum Entropy Power Inequality is not saturated even asymptotically by quantum Gaussian states.

Entropic inequalities are the main tool to prove upper bounds to quantum communication rates \cite{wilde2017quantum,holevo2013quantum} and to prove the security of quantum key distribution schemes \cite{coles2017entropic}.
In these scenarios, a prominent role is played by entropic inequalities in the presence of quantum memory, where the entropies are conditioned on the knowledge of an external observer holding a memory quantum system.
The quantum conditional Entropy Power Inequality proven in this paper will have a profound impact in quantum information theory and quantum cryptography.
The inequality has been fundamental in the proof of a new uncertainty relation for the conditional Wehrl entropy \cite{de2017uncertainty}.
In \autoref{sec:capacity}, we exploit the inequality to prove an upper bound to the entanglement-assisted classical capacity of a non-Gaussian quantum channel.
This implication has been first considered in \cite{koenig2015conditional}, section III.

The proof of the quantum conditional Entropy Power inequality is based on the evolution with the heat semigroup as in \cite{de2015multimode,de2014generalization,konig2014entropy,koenig2015conditional}.
For simplifying the proof, we reformulate the inequality in the equivalent linear version \eqref{eq:EPIl} through the Legendre transform.
The linear inequality \eqref{eq:EPIl} had been proven with a particular choice of $M$ for Gaussian input states in \cite{koenig2015conditional}, Theorem 8.1.
Our proof consists of two parts:
\begin{itemize}
  \item We prove the quantum conditional Stam inequality (\autoref{thm:Stam}), which provides an upper bound to the quantum conditional Fisher information of the output of the beam-splitter or of the squeezing in terms of the quantum conditional Fisher information of the two inputs.
      This inequality implies that the difference between the two sides of the linear inequality \eqref{eq:EPIl} decreases along the evolution with the heat semigroup.
      The linear version \eqref{eq:Stamlin} of the quantum conditional Stam inequality in the particular case $\lambda=\eta$ had been proven in \cite{koenig2015conditional}, eq. (62).
      The proof of \cite{koenig2015conditional}, as well as the proofs of the unconditioned quantum Stam inequality of \cite{de2015multimode,de2014generalization,konig2014entropy}, are affected by regularity issues.
      Indeed, all these papers define the quantum Fisher information as the Hessian of the relative entropy with respect to the displacements, but they do not prove that this Hessian is well defined.
      We introduce a new integral version of the conditional quantum Fisher information, and we define the conditional quantum Fisher information as the limit of its integral version.
      This definition solves all the previous regularity issues.
      Our proof of the quantum conditional Stam inequality is based on an integral version of the quantum de Bruijn identity (\autoref{thm:dBG}), that relates the increase of the quantum conditional entropy generated by the heat semigroup with the integral quantum conditional Fisher information (\autoref{def:Delta}).
  \item We prove that the quantum conditional entropy has an universal scaling independent on the initial state (\autoref{thm:scaling}) in the infinite time limit under the evolution with the heat semigroup.
      This scaling was already known for Gaussian states (\cite{koenig2015conditional}, Lemma 6.1) and implies that the linear inequality \eqref{eq:EPIl} asymptotically becomes an equality.
      Our proof is based on a more general result (\autoref{thm:liminf}), stating that the minimum quantum conditional entropy of the output of any Gaussian quantum channel is asymptotically achieved by the purification of the thermal quantum Gaussian states with infinite temperature.
\end{itemize}
The paper is structured as follows.
In \autoref{sec:GQS} we present Gaussian quantum systems, the beam-splitter and the squeezing.
In \autoref{sec:Fisher} we present the quantum integral conditional Fisher information, and in \autoref{sec:Stam} we prove the quantum conditional Stam inequality.
In \autoref{sec:scaling} we prove the universal asymptotic scaling of the quantum conditional entropy.
In \autoref{sec:EPI} we prove the quantum conditional Entropy Power Inequality, and in \autoref{sec:sharp} we prove that this inequality is optimal.
In \autoref{sec:capacity} we apply the quantum conditional Entropy Power Inequality to prove an upper bound to the entanglement-assisted classical capacity of a non-Gaussian quantum channel.
We conclude in \autoref{sec:concl}.
\autoref{app:cCEPI} and \autoref{app} contain the proof of the classical conditional Entropy Power Inequality and of the auxiliary lemmas, respectively.

\section{Gaussian quantum systems}\label{sec:GQS}
The Hilbert space of a Gaussian quantum system with $n$ modes is the irreducible representation of the canonical commutation relations
\begin{equation}\label{eq:CCRPQ}
\left[\hat{Q}^k,\;\hat{Q}^l\right] = 0\;,\quad\left[\hat{P}^k,\;\hat{P}^l\right] = 0\;,\quad\left[\hat{Q}^k,\;\hat{P}^l\right] = i\,\delta^{kl}\,\hat{\mathbb{I}}\;,\quad k,\,l=1,\,\ldots,\,n\;.
\end{equation}
We define
\begin{equation}
\hat{R}^{2k-1} := \hat{Q}^k\;,\qquad\hat{R}^{2k} := \hat{P}^k\;,\qquad k=1,\,\ldots,\,n\;,
\end{equation}
and \eqref{eq:CCRPQ} becomes
\begin{equation}
\left[\hat{R}^i,\;\hat{R}^{j}\right] = i\,\Delta^{ij}\,\hat{\mathbb{I}}\;,\qquad i,\,j=1,\,\ldots,\,2n\;,
\end{equation}
where
\begin{equation}
\Delta := \bigoplus_{k=1}^n\left(
                             \begin{array}{cc}
                               0 & 1 \\
                               -1 & 0 \\
                             \end{array}
                           \right)
\end{equation}
is the symplectic form.
The Hamiltonian of the system is
\begin{equation}\label{eq:defH}
\hat{H} = \frac{1}{2}\sum_{i=1}^{2n}\left(\hat{R}^i\right)^2 - \frac{n}{2}\,\hat{\mathbb{I}}\;.
\end{equation}

\begin{definition}[displacement operators]
For any $\mathbf{x}\in\mathbb{R}^{2n}$ we define the displacement operator
\begin{equation}
\hat{D}(\mathbf{x}) := \exp\left(i\sum_{i=1}^{2n}x^i\,\Delta^{-1}_{ij}\,\hat{R}^j\right)\;,
\end{equation}
the unitary operator satisfying for any $i=1,\,\ldots,\,2n$
\begin{equation}
{\hat{D}(\mathbf{x})}^\dag\,\hat{R}^i\,\hat{D}(\mathbf{x}) = \hat{R}^i + x^i\,\hat{\mathbb{I}}\;.
\end{equation}
\end{definition}

\begin{definition}[first moments]
The first moments of a quantum state $\hat{\rho}$ are
\begin{equation}
r^i(\hat{\rho}) := \mathrm{Tr}\left[\hat{R}^i\,\hat{\rho}\right]\;,\qquad i=1,\,\ldots,\,2n\;.
\end{equation}
\end{definition}

\begin{definition}[covariance matrix]
The covariance matrix of a quantum state $\hat{\rho}$ with finite first moments is
\begin{equation}
\sigma^{ij}(\hat{\rho}) := \frac{1}{2}\mathrm{Tr}\left[\left\{\hat{R}^i-r^i(\hat{\rho}),\;\hat{R}^j-r^j(\hat{\rho})\right\}\hat{\rho}\right]\;,\qquad i,\,j=1,\,\ldots,\,2n\;,
\end{equation}
where
\begin{equation}
\left\{\hat{X},\;\hat{Y}\right\} := \hat{X}\,\hat{Y} + \hat{Y}\,\hat{X}
\end{equation}
is the anticommutator.
\end{definition}
\begin{definition}[symplectic eigenvalues]
The symplectic eigenvalues of a real positive matrix $\sigma$ are the absolute values of the eigenvalues of $\Delta^{-1}\sigma$.
\end{definition}

\begin{definition}[heat semigroup]\label{def:heat}
The heat semigroup is the time evolution generated by the convex combination of displacement operators with Gaussian distribution and covariance matrix $t\,I_{2n}$: for any quantum state $\hat{\rho}$
\begin{equation}
\mathcal{N}(t)(\hat{\rho}) := \int_{\mathbb{R}^{2n}} \hat{D}(\mathbf{x})\,\hat{\rho}\,{\hat{D}(\mathbf{x})}^\dag\,\mathrm{e}^{-\frac{|\mathbf{x}|^2}{2t}}\frac{\mathrm{d}^{2n}x}{(2\pi\,t)^n}\;.
\end{equation}
For any $s,\,t\ge0$
\begin{equation}
\mathcal{N}(s)\circ\mathcal{N}(t) = \mathcal{N}(s+t)\;.
\end{equation}
\end{definition}

\subsection{Quantum Gaussian states}
A quantum Gaussian state is a density operator proportional to the exponential of a quadratic polynomial in the quadratures:
\begin{equation}
\hat{\gamma} = \frac{\exp\left(-\frac{1}{2}\sum_{i,\,j=1}^{2n}\left(\hat{R}^i-r^i\right)h_{ij}\left(\hat{R}^j-r^j\right)\right)} {\mathrm{Tr}\exp\left(-\frac{1}{2}\sum_{i,\,j=1}^{2n}\left(\hat{R}^i-r^i\right)h_{ij}\left(\hat{R}^j-r^j\right)\right)}\;,
\end{equation}
where $h$ is a positive real $2n\times2n$ matrix and $\mathbf{r}\in\mathbb{R}^{2n}$.
A thermal Gaussian state is a Gaussian state with zero first moments ($\mathbf{r}=0$) and where the matrix $h$ is proportional to the identity:
\begin{equation}
\hat{\omega} = \frac{\mathrm{e}^{-\beta\hat{H}}}{\mathrm{Tr}\,\mathrm{e}^{-\beta\hat{H}}}\;,\qquad h=\beta\,I_{2n}\;,\qquad \beta>0\;.
\end{equation}
The von Neumann entropy of a quantum Gaussian state is
\begin{equation}
S = \sum_{k=1}^n g\left(\nu_k-\frac{1}{2}\right)\;,
\end{equation}
where
\begin{equation}
g(x):=\left(x+1\right)\ln\left(x+1\right) - x\ln x\;,
\end{equation}
and $\nu_1,\,\ldots,\,\nu_n$ are the symplectic eigenvalues of its covariance matrix.

\subsection{Beam-splitter and squeezing}
Given the $n$-mode Gaussian quantum systems $A$, $B$, $C$ and $D$, the beam-splitter with inputs $A$ and $B$, outputs $C$ and $D$ and transmissivity $0\le\eta\le1$ is implemented by the mixing unitary operator $\hat{U}_\eta:AB\to CD$ acting on the quadratures as \cite{ferraro2005gaussian}
\begin{subequations}
\begin{align}
\hat{U}_\eta^\dag\,\hat{R}_C^i\,\hat{U}_\eta &= \sqrt{\eta}\,\hat{R}_A^i + \sqrt{1-\eta}\,\hat{R}_B^i\;,\\
\hat{U}_\eta^\dag\,\hat{R}_D^i\,\hat{U}_\eta &= -\sqrt{1-\eta}\,\hat{R}_A^i + \sqrt{\eta}\,\hat{R}_B^i\;,\qquad i=1,\,\ldots,\,2n\;.
\end{align}
\end{subequations}
The beam-splitter is a passive element, and does not require energy for functioning.
Indeed, the mixing unitary operator preserves the Hamiltonian \eqref{eq:defH}:
\begin{equation}
\hat{U}_\eta\left(\hat{H}_A + \hat{H}_B\right)\hat{U}_\eta^\dag = \hat{H}_C + \hat{H}_D\;.
\end{equation}

The squeezing unitary operator with parameter $\eta\ge1$ acts on the quadratures as \cite{ferraro2005gaussian}
\begin{subequations}
\begin{align}
\hat{U}_\eta^\dag\,\hat{Q}_C^k\,\hat{U}_\eta &= \sqrt{\eta}\,\hat{Q}_A^k + \sqrt{\eta-1}\,\hat{Q}_B^k\;,\\
\hat{U}_\eta^\dag\,\hat{P}_C^k\,\hat{U}_\eta &= \sqrt{\eta}\,\hat{P}_A^k - \sqrt{\eta-1}\,\hat{P}_B^k\;,\\
\hat{U}_\eta^\dag\,\hat{Q}_D^k\,\hat{U}_\eta &= \sqrt{\eta-1}\,\hat{Q}_A^k + \sqrt{\eta}\,\hat{Q}_B^k\;,\\
\hat{U}_\eta^\dag\,\hat{P}_D^k\,\hat{U}_\eta &= -\sqrt{\eta-1}\,\hat{P}_A^k + \sqrt{\eta}\,\hat{P}_B^k\;,\qquad k=1,\,\ldots,\,n\;.
\end{align}
\end{subequations}
The squeezing acts differently on the $Q^k$ and on the $P^k$.
Indeed, the squeezing is an active operation that requires energy, and the squeezing unitary operator does not preserve the Hamiltonian \eqref{eq:defH}.

We define for any joint quantum state $\hat{\rho}_{AB}$ on $AB$ and any $\eta\ge0$
\begin{equation}
\mathcal{B}_\eta(\hat{\rho}_{AB}) := \mathrm{Tr}_D\left[\hat{U}_\eta\,\hat{\rho}_{AB}\,\hat{U}_\eta^\dag\right]\;.
\end{equation}
$\mathcal{B}_\eta$ implements the beam-splitter for $0\le\eta\le1$ and the squeezing for $\eta\ge1$.

\begin{lemma}[compatibility with displacements]\label{lem:BD}
We have for any quantum state $\hat{\rho}_{AB}$ on $AB$ and any $\mathbf{x},\,\mathbf{y}\in\mathbb{R}^{2n}$
\begin{align}
&\mathcal{B}_\eta\left(\hat{D}_A(\mathbf{x})\,\hat{D}_B(\mathbf{y})\,\hat{\rho}_{AB}\,{\hat{D}_A(\mathbf{x})}^\dag\,{\hat{D}_B(\mathbf{y})}^\dag\right)\nonumber\\
&= \hat{D}_C\left(\sqrt{\eta}\,\mathbf{x} + \sqrt{1-\eta}\,\mathbf{y}\right)\,\mathcal{B}_\eta(\hat{\rho}_{AB})\,{\hat{D}_C\left(\sqrt{\eta}\,\mathbf{x} + \sqrt{1-\eta}\,\mathbf{y}\right)}^\dag
\end{align}
for the beam-splitter with transmissivity $0\le\eta\le1$, and
\begin{align}
&\mathcal{B}_\eta\left(\hat{D}_A(\mathbf{x})\,\hat{D}_B(\mathbf{y})\,\hat{\rho}_{AB}\,{\hat{D}_A(\mathbf{x})}^\dag\,{\hat{D}_B(\mathbf{y})}^\dag\right)\nonumber\\
&= \hat{D}_C\left(\sqrt{\eta}\,\mathbf{x} + \sqrt{\eta-1}\,T\mathbf{y}\right)\,\mathcal{B}_\eta(\hat{\rho}_{AB})\,{\hat{D}_C\left(\sqrt{\eta}\,\mathbf{x} + \sqrt{\eta-1}\,T\mathbf{y}\right)}^\dag
\end{align}
for the squeezing with parameter $\eta\ge1$, where
\begin{equation}\label{eq:defT}
T := \bigoplus_{k=1}^n\left(
                        \begin{array}{cc}
                          1 & 0 \\
                          0 & -1 \\
                        \end{array}
                      \right)
\end{equation}
is the time-reversal matrix that leaves the $Q^k$ unchanged and reverses the sign of the $P^k$.
\end{lemma}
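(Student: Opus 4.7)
The plan is to prove the identity at the level of unitaries, namely $\hat{U}_\eta\,\hat{D}_A(\mathbf{x})\,\hat{D}_B(\mathbf{y})\,\hat{U}_\eta^\dag = \hat{D}_C(\mathbf{x}_C)\,\hat{D}_D(\mathbf{y}_D)$, and then eliminate the $D$-side displacement by the partial trace. Since $A$ and $B$ act on separate Hilbert spaces, the given definition of $\hat{D}$ lets me combine the two input displacements into a single exponential $\exp\!\bigl(i\sum_{i,j}(x^i\,\Delta^{-1}_{ij}\,\hat{R}_A^j + y^i\,\Delta^{-1}_{ij}\,\hat{R}_B^j)\bigr)$. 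Conjugation by $\hat{U}_\eta$ goes inside the exponent and replaces $\hat{R}_A$, $\hat{R}_B$ by $\hat{U}_\eta\hat{R}_A\hat{U}_\eta^\dag$, $\hat{U}_\eta\hat{R}_B\hat{U}_\eta^\dag$, so the core linear-algebra step is inverting the action of $\hat{U}_\eta^\dag$ on the quadratures stated in the excerpt.

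For the beam-splitter, the given relations form an orthogonal mixing and inverting is immediate: one finds $\hat{U}_\eta \hat{R}_A^i \hat{U}_\eta^\dag = \sqrt{\eta}\,\hat{R}_C^i - \sqrt{1-\eta}\,\hat{R}_D^i$ and $\hat{U}_\eta \hat{R}_B^i \hat{U}_\eta^\dag = \sqrt{1-\eta}\,\hat{R}_C^i + \sqrt{\eta}\,\hat{R}_D^i$. Substituting these into the exponent and regrouping, the coefficient of $\hat{R}_C^j$ becomes $\sqrt{\eta}\,x^i+\sqrt{1-\eta}\,y^i$, and the coefficient of $\hat{R}_D^j$ becomes $-\sqrt{1-\eta}\,x^i+\sqrt{\eta}\,y^i$. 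Because $C$ and $D$ act on distinct tensor factors, the exponential factors as $\hat{D}_C(\sqrt{\eta}\,\mathbf{x}+\sqrt{1-\eta}\,\mathbf{y})\,\hat{D}_D(-\sqrt{1-\eta}\,\mathbf{x}+\sqrt{\eta}\,\mathbf{y})$.

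For the squeezing, the same procedure works but now the mixing acts differently on $\hat{Q}$ and $\hat{P}$ (a hyperbolic instead of orthogonal transformation). Inverting yields $\hat{U}_\eta\,\hat{Q}_B^k\,\hat{U}_\eta^\dag=-\sqrt{\eta-1}\,\hat{Q}_C^k+\sqrt{\eta}\,\hat{Q}_D^k$ and $\hat{U}_\eta\,\hat{P}_B^k\,\hat{U}_\eta^\dag=+\sqrt{\eta-1}\,\hat{P}_C^k+\sqrt{\eta}\,\hat{P}_D^k$, and similarly for $A$. The sign discrepancy between the $\hat{Q}$ and $\hat{P}$ blocks is precisely what is encoded by the time-reversal matrix $T$ in \eqref{eq:defT}, so when the coefficients are bundled back into the displacement vector on $C$ one gets $\sqrt{\eta}\,\mathbf{x}+\sqrt{\eta-1}\,T\mathbf{y}$, and an analogous expression on $D$.

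To conclude, I apply the conjugated identity to $\hat{\rho}_{AB}$, so that $\hat{U}_\eta\,\hat{D}_A(\mathbf{x})\hat{D}_B(\mathbf{y})\,\hat{\rho}_{AB}\,\hat{D}_B(\mathbf{y})^\dag\hat{D}_A(\mathbf{x})^\dag\,\hat{U}_\eta^\dag = \hat{D}_C(\mathbf{x}_C)\hat{D}_D(\mathbf{y}_D)\,\hat{U}_\eta\hat{\rho}_{AB}\hat{U}_\eta^\dag\,\hat{D}_D(\mathbf{y}_D)^\dag\hat{D}_C(\mathbf{x}_C)^\dag$, and then take $\mathrm{Tr}_D$. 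Since $\hat{D}_C(\mathbf{x}_C)$ acts as the identity on $D$, it passes through the partial trace; the remaining $\hat{D}_D(\mathbf{y}_D)$ cancels by cyclicity within $\mathrm{Tr}_D$, leaving $\hat{D}_C(\mathbf{x}_C)\,\mathcal{B}_\eta(\hat{\rho}_{AB})\,\hat{D}_C(\mathbf{x}_C)^\dag$, as required. There is no genuine obstacle: the only place that requires care is matching the inverse symplectic transformation with the correct sign conventions so that $T$ appears exactly in the squeezing block.
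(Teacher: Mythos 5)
Your proof is correct; the paper itself states this lemma without proof, and your computation is exactly the standard verification the authors evidently had in mind (conjugate the combined displacement exponential, invert the symplectic action of $\hat{U}_\eta$ on the quadratures, factor into $\hat{D}_C\otimes\hat{D}_D$, and kill the $D$-displacement under $\mathrm{Tr}_D$ by cyclicity). The one step you wave at rather than write out — bundling the coefficients back into a displacement vector in the squeezing case — does check out, since $\Delta^{-1}T=-T\Delta^{-1}$ and $T^{\mathrm T}=T$ turn the term $-\sqrt{\eta-1}\,\mathbf{y}^{\mathrm T}\Delta^{-1}T\,\hat{R}_C$ into $+\sqrt{\eta-1}\,(T\mathbf{y})^{\mathrm T}\Delta^{-1}\hat{R}_C$, giving precisely the vector $\sqrt{\eta}\,\mathbf{x}+\sqrt{\eta-1}\,T\mathbf{y}$ in the statement.
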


\begin{lemma}[compatibility with heat semigroup]\label{lem:BN}
For any $s,\,t\ge0$
\begin{equation}
\mathcal{B}_\eta\circ \left(\mathcal{N}_A(s)\otimes\mathcal{N}_B(t)\right) = \mathcal{N}_C(\eta s + |1-\eta|t)\circ\mathcal{B}_\eta\;.
\end{equation}
\end{lemma}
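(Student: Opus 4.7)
The plan is to unfold the definition of the heat semigroup on both sides and use \autoref{lem:BD} to pass the displacements through $\mathcal{B}_\eta$. By \autoref{def:heat}, the left-hand side acting on a state $\hat\rho_{AB}$ equals
\begin{equation}
\int_{\mathbb{R}^{4n}}\mathcal{B}_\eta\!\left(\hat D_A(\mathbf{x})\hat D_B(\mathbf{y})\,\hat\rho_{AB}\,{\hat D_A(\mathbf{x})}^\dag{\hat D_B(\mathbf{y})}^\dag\right)\mathrm{e}^{-\frac{|\mathbf{x}|^2}{2s}-\frac{|\mathbf{y}|^2}{2t}}\frac{\mathrm{d}^{2n}x\,\mathrm{d}^{2n}y}{(2\pi)^{2n}(st)^n}\;,
\end{equation}
where we used linearity and continuity of $\mathcal{B}_\eta$ to move it under the integral.

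Next I would apply \autoref{lem:BD} pointwise inside the integral. In the beam-splitter regime $0\le\eta\le 1$ this produces a displacement of $\mathcal{B}_\eta(\hat\rho_{AB})$ by $\mathbf{z}=\sqrt{\eta}\,\mathbf{x}+\sqrt{1-\eta}\,\mathbf{y}$, and in the squeezing regime $\eta\ge 1$ by $\mathbf{z}=\sqrt{\eta}\,\mathbf{x}+\sqrt{\eta-1}\,T\mathbf{y}$. The remaining task is to compute the pushforward under the affine map $(\mathbf{x},\mathbf{y})\mapsto\mathbf{z}$ of the product Gaussian measure with covariance $sI_{2n}\oplus tI_{2n}$.

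Since the two summands are independent centred Gaussians, $\mathbf{z}$ is centred Gaussian with covariance
\begin{equation}
\eta\,s\,I_{2n}+|1-\eta|\,t\,M M^{\top}\;,\qquad M=I_{2n}\text{ or }M=T\;,
\end{equation}
and the key observation is that $T$ defined in \eqref{eq:defT} is orthogonal, so $TT^{\top}=I_{2n}$ in the squeezing case as well. Hence in both regimes the covariance is $(\eta s+|1-\eta|t)I_{2n}$, and the integral collapses to
\begin{equation}
\int_{\mathbb{R}^{2n}}\hat D_C(\mathbf{z})\,\mathcal{B}_\eta(\hat\rho_{AB})\,{\hat D_C(\mathbf{z})}^\dag\,\mathrm{e}^{-\frac{|\mathbf{z}|^2}{2(\eta s+|1-\eta|t)}}\frac{\mathrm{d}^{2n}z}{(2\pi(\eta s+|1-\eta|t))^n}\;,
\end{equation}
which is exactly $\mathcal{N}_C(\eta s+|1-\eta|t)(\mathcal{B}_\eta(\hat\rho_{AB}))$ by \autoref{def:heat}.

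I expect no real obstacle: the content of the lemma is the semigroup-convolution identity for Gaussian measures transported through the symplectic linear map implementing the beam-splitter or squeezing. The only point to be careful about is the squeezing case, where one must notice that the time-reversal $T$ appearing in \autoref{lem:BD} is an orthogonal involution, so it preserves the isotropic Gaussian measure on $\mathbb{R}^{2n}$; this is what ensures that the output covariance is still a multiple of the identity and that the exponents $\eta$ and $\eta-1=|1-\eta|$ add up in the correct way.
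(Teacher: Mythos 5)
Your proof is correct and follows exactly the route the paper intends: its entire proof of this lemma is the single line ``Follows from \autoref{lem:BD}'', and your argument simply supplies the details of that reduction, including the one genuinely non-obvious point (that $T$ is orthogonal, so the pushforward Gaussian in the squeezing case is still isotropic with covariance $(\eta s+|1-\eta|t)I_{2n}$).
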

\begin{proof}
Follows from \autoref{lem:BD}.
\end{proof}

\section{Quantum integral conditional Fisher information}\label{sec:Fisher}
In this Section, we define the quantum integral conditional Fisher information that will permit us to prove the regularity of the quantum Fisher information of \cite{konig2014entropy,de2015multimode,de2014generalization,koenig2015conditional}.

\begin{definition}[quantum integral conditional Fisher information]\label{def:Delta}
Let $A$ be a Gaussian quantum system with $n$ modes, and $M$ a quantum system.
Let $\hat{\rho}_{AM}$ be a quantum state on $AM$.
For any $t\ge0$, we define the integral Fisher information of $A$ conditioned on $M$ as
\begin{subequations}
\begin{align}
\Delta_{A|M}(\hat{\rho}_{AM})(t) &:= I(A:X|M)_{\hat{\sigma}_{AMX}(t)}\ge0\;,\qquad t>0\;,\\
\Delta_{A|M}(\hat{\rho}_{AM})(0) &:= 0\;,
\end{align}
\end{subequations}
where $X$ is a classical Gaussian random variable with values in $\mathbb{R}^{2n}$ and probability density function
\begin{equation}
\mathrm{d}p_{X}(t)(\mathbf{x}) = \mathrm{e}^{-\frac{|\mathbf{x}|^2}{2t}}\frac{\mathrm{d}^{2n}x}{(2\pi\,t)^n}\;,\qquad \mathbf{x}\in\mathbb{R}^{2n}\;,
\end{equation}
and $\hat{\sigma}_{AMX}(t)$ is the quantum state on $AMX$ such that its marginal on $X$ is $p_{X}(t)$ and for any $\mathbf{x}\in\mathbb{R}^{2n}$
\begin{equation}
\hat{\sigma}_{AM|X=\mathbf{x}}(t) = \hat{D}_A(\mathbf{x})\,\hat{\rho}_{AM}\,{\hat{D}_A(\mathbf{x})}^\dag\;.
\end{equation}
\end{definition}
\begin{remark}
The marginal over $AM$ of $\hat{\sigma}_{AMX}$ is
\begin{equation}
\hat{\sigma}_{AM} = (\mathcal{N}(t)\otimes\mathbb{I}_M)(\hat{\rho}_{AM})\;.
\end{equation}
\end{remark}

The fundamental property of the quantum integral conditional Fisher information is the relation with the increase in the quantum conditional entropy generated by the heat semigroup.

\begin{theorem}[quantum integral conditional de Bruijn identity]\label{thm:dBG}
The quantum integral conditional Fisher information coincides with the increase of the quantum conditional entropy generated by the heat semigroup: for any $t\ge0$,
\begin{equation}
\Delta_{A|M}(\hat{\rho}_{AM})(t) = S(A|M)_{(\mathcal{N}_A(t)\otimes\mathbb{I}_M)(\hat{\rho}_{AM})} - S(A|M)_{\hat{\rho}_{AM}}\;.
\end{equation}
\end{theorem}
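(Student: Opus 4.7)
The plan is to prove the identity by unpacking the definition of the conditional mutual information $I(A:X|M)_{\hat{\sigma}_{AMX}(t)}$ and exploiting two structural features of $\hat{\sigma}_{AMX}(t)$: the classical nature of $X$ and the unitarity of the displacement operators $\hat{D}_A(\mathbf{x})$. Throughout I would fix $t>0$ and handle the boundary case $t=0$ at the end as a trivial check (both sides are $0$ by definition).

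First I would expand
\begin{equation*}
I(A:X|M)_{\hat{\sigma}_{AMX}(t)} = S(AM)_{\hat{\sigma}} + S(XM)_{\hat{\sigma}} - S(AXM)_{\hat{\sigma}} - S(M)_{\hat{\sigma}},
\end{equation*}
and then identify each term. The marginal on $AM$ is, by averaging over $X$,
\begin{equation*}
\hat{\sigma}_{AM}(t) = \int_{\mathbb{R}^{2n}}\hat{D}_A(\mathbf{x})\,\hat{\rho}_{AM}\,\hat{D}_A(\mathbf{x})^\dag\,\mathrm{d}p_X(t)(\mathbf{x}) = (\mathcal{N}_A(t)\otimes\mathbb{I}_M)(\hat{\rho}_{AM}),
\end{equation*}
so the first term is already the conditional entropy appearing on the right-hand side (up to subtracting $S(M)$). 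Next, since $\hat D_A(\mathbf{x})$ acts as the identity on $M$, the marginal on $M$ of each $\hat{\sigma}_{AM|X=\mathbf{x}}(t)$ equals $\hat{\rho}_M$; hence $\hat{\sigma}_M(t)=\hat{\rho}_M$ and the random variable $X$ is independent of $M$, which gives $S(XM)_{\hat\sigma} = S(X) + S(M)_{\hat\rho}$ and $S(M)_{\hat\sigma}=S(M)_{\hat\rho}$.

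The remaining term is the entropy of the classical–quantum state $\hat\sigma_{AXM}(t)$. Using the standard identity for classical–quantum states,
\begin{equation*}
S(AXM)_{\hat\sigma} = S(X) + \int_{\mathbb{R}^{2n}} S\bigl(\hat{\sigma}_{AM|X=\mathbf{x}}(t)\bigr)\,\mathrm{d}p_X(t)(\mathbf{x}),
\end{equation*}
and noting that $\hat{\sigma}_{AM|X=\mathbf{x}}(t)=\hat{D}_A(\mathbf{x})\,\hat{\rho}_{AM}\,\hat{D}_A(\mathbf{x})^\dag$ is unitarily equivalent to $\hat{\rho}_{AM}$, the integrand is constantly equal to $S(AM)_{\hat\rho}$, giving $S(AXM)_{\hat\sigma} = S(X) + S(AM)_{\hat\rho}$. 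Plugging all four pieces back yields
\begin{equation*}
I(A:X|M)_{\hat{\sigma}_{AMX}(t)} = \bigl(S(AM)_{\hat\sigma}-S(M)_{\hat\rho}\bigr)-\bigl(S(AM)_{\hat\rho}-S(M)_{\hat\rho}\bigr) = S(A|M)_{\hat\sigma} - S(A|M)_{\hat\rho},
\end{equation*}
which, together with the identification $\hat\sigma_{AM}(t)=(\mathcal{N}_A(t)\otimes\mathbb{I}_M)(\hat\rho_{AM})$, is exactly the claimed identity.

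The main obstacle I anticipate is not conceptual but technical: the states live on an infinite-dimensional Hilbert space, so every entropy rearrangement above has to be justified in the presence of possibly infinite or a priori ill-defined entropies. In particular, the classical–quantum decomposition of $S(AXM)_{\hat\sigma}$ and the subtraction scheme for defining $I(A:X|M)$ when $S(X)=\infty$ need care. I would circumvent this either by writing $I(A:X|M)$ directly as a relative entropy (which is always well defined, possibly $+\infty$) and showing that the ``$S(X)$'' contributions cancel formally, or by first establishing the identity for states of finite entropy on each factor and then extending by a standard approximation/continuity argument (e.g. compressing $\hat\rho_{AM}$ with high-energy projectors and using lower semicontinuity of relative entropy).
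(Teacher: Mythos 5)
Your proposal is correct and follows essentially the same route as the paper: both unpack $I(A:X|M)_{\hat{\sigma}_{AMX}(t)}$ using the classical--quantum structure of $\hat{\sigma}_{AMX}(t)$, identify the $AM$-marginal with $(\mathcal{N}_A(t)\otimes\mathbb{I}_M)(\hat{\rho}_{AM})$, and use unitary invariance of the entropy under the displacements to evaluate the conditional term. The only cosmetic difference is that the paper writes $I(A:X|M)=S(A|M)-S(A|MX)$ and stays with conditional entropies (so the differential entropy $S(X)$ never appears explicitly), whereas you expand into four unconditional entropies and let the $S(X)$ contributions cancel.
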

\begin{proof}
\begin{align}\label{eq:dB}
I(A:X|M)_{\hat{\sigma}_{AMX}} &= S(A|M)_{\hat{\sigma}_{AMX}} - S(A|MX)_{\hat{\sigma}_{AMX}}\nonumber\\
&= S(A|M)_{\hat{\sigma}_{AM}} - \int_{\mathbb{R}^{2n}} S(A|M)_{\hat{\sigma}_{AM|X=\mathbf{x}}}\,\mathrm{d}p_X(\mathbf{x})\nonumber\\
&= S(A|M)_{\hat{\sigma}_{AM}} - \int_{\mathbb{R}^{2n}} S(A|M)_{\hat{\rho}_{AM}}\,\mathrm{d}p_X(\mathbf{x})\nonumber\\
&= S(A|M)_{\hat{\sigma}_{AM}} - S(A|M)_{\hat{\rho}_{AM}}\;.
\end{align}
\end{proof}

The goal of the remainder of this Section is proving that the quantum integral conditional Fisher information is a continuous, increasing and concave function of time (\autoref{thm:conc}).
This result will permit us to prove the regularity of the quantum Fisher information.

\begin{lemma}[continuity of quantum integral conditional Fisher information]\label{lem:cont}
For any quantum state $\hat{\rho}_{AM}$ such that
\begin{equation}
\mathrm{Tr}_A\left[\hat{H}_A\,\hat{\rho}_A\right]=E_0<\infty\;,\qquad S(\hat{\rho}_M)<\infty
\end{equation}
we have
\begin{equation}\label{eq:A|Mt}
\lim_{t\to0}\Delta_{A|M}((\mathcal{N}(t)\otimes\mathbb{I}_M)(\hat{\rho}_{AM})) = \Delta_{A|M}(\hat{\rho}_{AM})\;.
\end{equation}
\end{lemma}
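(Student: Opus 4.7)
The plan is to use the integral conditional de Bruijn identity (\autoref{thm:dBG}) to recast the problem as a continuity statement for the quantum conditional entropy along the heat semigroup. Setting $\hat{\rho}_u := (\mathcal{N}_A(u)\otimes\mathbb{I}_M)(\hat{\rho}_{AM})$, the identity together with the semigroup property gives, for each $s\ge 0$,
\begin{equation*}
\Delta_{A|M}(\hat{\rho}_t)(s) = S(A|M)_{\hat{\rho}_{s+t}} - S(A|M)_{\hat{\rho}_t},\quad \Delta_{A|M}(\hat{\rho}_{AM})(s) = S(A|M)_{\hat{\rho}_s} - S(A|M)_{\hat{\rho}_{AM}},
\end{equation*}
so the lemma reduces pointwise in $s$ to the continuity of $u\mapsto S(A|M)_{\hat{\rho}_u}$ at $u=0$ and $u=s$. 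Replacing $\hat{\rho}_{AM}$ by $\hat{\rho}_s$ (which still has finite $A$-energy $E_0+ns$ and the same $M$-marginal $\hat{\rho}_M$) reduces the task to continuity at $u=0^+$.

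I will prove the two one-sided inequalities separately. The lower bound is immediate: by \autoref{thm:dBG} the increment $S(A|M)_{\hat{\rho}_{u+h}}-S(A|M)_{\hat{\rho}_u}$ equals $\Delta_{A|M}(\hat{\rho}_u)(h)$, a non-negative quantum conditional mutual information, so $u\mapsto S(A|M)_{\hat{\rho}_u}$ is non-decreasing and $\liminf_{u\to 0^+}S(A|M)_{\hat{\rho}_u}\ge S(A|M)_{\hat{\rho}_{AM}}$. For the upper bound, I decompose the conditional entropy against a reference thermal Gaussian state $\hat{\omega}_A$ on $A$, for which $\ln\hat{\omega}_A=-\beta\hat{H}_A-C\,\hat{\mathbb{I}}_A$. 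Since the heat semigroup preserves the $M$-marginal,
\begin{equation*}
S(A|M)_{\hat{\rho}_u} = \beta\,\mathrm{Tr}[\hat{H}_A\,(\hat{\rho}_u)_A] + C - S(\hat{\rho}_u\,\|\,\hat{\omega}_A\otimes\hat{\rho}_M).
\end{equation*}
A direct computation using $\hat{D}(\mathbf{x})^\dag\hat{H}_A\hat{D}(\mathbf{x})=\hat{H}_A+\sum_i x^i\hat{R}^i+|\mathbf{x}|^2/2$ and Gaussian averaging gives $\mathrm{Tr}[\hat{H}_A\,(\hat{\rho}_u)_A]=E_0+nu$, which is continuous in $u$. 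The heat semigroup is trace-norm continuous at $u=0$ (the Gaussian weights concentrate at the origin and displacement is strongly continuous on trace-class operators), so $\hat{\rho}_u\to\hat{\rho}_{AM}$ in trace norm, and the lower semicontinuity of the quantum relative entropy in its first argument yields
\begin{equation*}
\liminf_{u\to 0^+}S(\hat{\rho}_u\,\|\,\hat{\omega}_A\otimes\hat{\rho}_M)\ge S(\hat{\rho}_{AM}\,\|\,\hat{\omega}_A\otimes\hat{\rho}_M).
\end{equation*}
Combined with the continuity of the energy term, this gives $\limsup_{u\to 0^+}S(A|M)_{\hat{\rho}_u}\le S(A|M)_{\hat{\rho}_{AM}}$ and closes the argument.

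The main obstacle is ensuring that the decomposition is meaningful, i.e.\ that every entropic quantity is finite. The finite-energy hypothesis on $A$ and the Gaussian maximum-entropy principle give $S(\hat{\rho}_A)<\infty$, so subadditivity yields $S(\hat{\rho}_{AM})\le S(\hat{\rho}_A)+S(\hat{\rho}_M)<\infty$; this makes $S(A|M)_{\hat{\rho}_{AM}}=S(\hat{\rho}_{AM})-S(\hat{\rho}_M)$ and the reference relative entropy $S(\hat{\rho}_{AM}\,\|\,\hat{\omega}_A\otimes\hat{\rho}_M)$ well-defined and finite. Because the energy along the semigroup grows only linearly in $u$, these finiteness bounds propagate uniformly for $u\in[0,T]$ for any fixed $T$, so the relative entropy decomposition remains valid throughout the limiting procedure.
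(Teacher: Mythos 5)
Your proof is correct and follows essentially the same route as the paper's: both reduce via the integral de Bruijn identity to the continuity of $t\mapsto S(A|M)_{(\mathcal{N}(t)\otimes\mathbb{I}_M)(\hat{\rho}_{AM})}$ at $t=0$, and both obtain the $\limsup$ bound by decomposing the conditional entropy against a thermal reference state $\hat{\omega}_A\otimes\hat{\rho}_M$ and invoking lower semicontinuity of the quantum relative entropy together with trace-norm continuity of the heat semigroup. The only (harmless) deviations are that you get the $\liminf$ bound from monotonicity of $S(A|M)$ along the semigroup rather than from lower semicontinuity of the von Neumann entropy, and you dispense with the paper's final $\beta\to0$ limit by using the exact continuity of the energy term $E_0+nu$.
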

\begin{proof}
From \autoref{thm:dBG}, the claim is equivalent to
\begin{equation}
\lim_{t\to0}S(A|M)_{(\mathcal{N}(t)\otimes\mathbb{I}_M)(\hat{\rho}_{AM})} = S(A|M)_{\hat{\rho}_{AM}}\;.
\end{equation}
We will then proceed along the same lines of the proof of the continuity of the entropy in the set of the quantum states with bounded average energy (\cite{holevo2013quantum}, Lemma 11.8).
We have (see e.g. \cite{de2017wehrl}, Lemma 2)
\begin{equation}
\lim_{t\to0}\left\|(\mathcal{N}(t)\otimes\mathbb{I}_M)(\hat{\rho}_{AM}) - \hat{\rho}_{AM}\right\|_1 = 0\;.
\end{equation}

Since the quantum entropy is lower semicontinuous (\cite{holevo2013quantum}, Theorem 11.6) and
\begin{equation}
S(A|M)_{(\mathcal{N}(t)\otimes\mathbb{I}_M)(\hat{\rho}_{AM})} = S((\mathcal{N}(t)\otimes\mathbb{I}_M)(\hat{\rho}_{AM})) - S(\hat{\rho}_M)\;,
\end{equation}
we have
\begin{equation}
\liminf_{t\to0}S(A|M)_{(\mathcal{N}(t)\otimes\mathbb{I}_M)(\hat{\rho}_{AM})} \ge S(A|M)_{\hat{\rho}_{AM}}\;.
\end{equation}
On the other hand, we have for any $\beta>0$ and any $0\le t<\epsilon$
\begin{align}
S(A|M)_{(\mathcal{N}(t)\otimes\mathbb{I}_M)(\hat{\rho}_{AM})} &= \beta\,\mathrm{Tr}_A\left[\hat{H}_A\,\mathcal{N}(t)(\hat{\rho}_A)\right] + \ln\mathrm{Tr}_A\mathrm{e}^{-\beta\hat{H}_A}\nonumber\\
&\phantom{=} - S\left((\mathcal{N}(t)\otimes\mathbb{I}_M)(\hat{\rho}_{AM})\left\|\frac{\mathrm{e}^{-\beta\hat{H}_A}}{\mathrm{Tr}_A\mathrm{e}^{-\beta\hat{H}_A}}\otimes\hat{\rho}_M\right.\right)\nonumber\\
&= \beta\left(E_0+n\,t\right) + \ln\mathrm{Tr}_A\mathrm{e}^{-\beta\hat{H}_A}\nonumber\\
&\phantom{=} - S\left((\mathcal{N}(t)\otimes\mathbb{I}_M)(\hat{\rho}_{AM})\left\|\frac{\mathrm{e}^{-\beta\hat{H}_A}}{\mathrm{Tr}_A\mathrm{e}^{-\beta\hat{H}_A}}\otimes\hat{\rho}_M\right.\right)\;,
\end{align}
where
\begin{equation}
S(\hat{\rho}\|\hat{\sigma}) = \mathrm{Tr}\left[\hat{\rho}\left(\ln\hat{\rho}-\ln\hat{\sigma}\right)\right]
\end{equation}
is the quantum relative entropy \cite{holevo2013quantum}.
Since the quantum relative entropy is lower semicontinuous (\cite{holevo2013quantum}, Theorem 11.6),
\begin{align}
\limsup_{t\to0}S(A|M)_{(\mathcal{N}(t)\otimes\mathbb{I}_M)(\hat{\rho}_{AM})} &\le \beta\left(E_0+n\,t\right) + \ln\mathrm{Tr}_A\mathrm{e}^{-\beta\hat{H}_A}\nonumber\\
& \phantom{=} - S\left(\hat{\rho}_{AM}\left\|\frac{\mathrm{e}^{-\beta\hat{H}_A}}{\mathrm{Tr}_A\mathrm{e}^{-\beta\hat{H}_A}}\otimes\hat{\rho}_M\right.\right)\nonumber\\
&= S(A|M)_{\hat{\rho}_{AM}} + \beta\left(E_0 + n\,t - \mathrm{Tr}_A\left[\hat{H}_A\,\hat{\rho}_A\right]\right)\;,
\end{align}
and the claim follows taking the limit $\beta\to0$.
\end{proof}

\begin{lemma}\label{lem:DN}
For any $s,\,t\ge0$,
\begin{equation}
\Delta_{A|M}((\mathcal{N}_A(s)\otimes\mathbb{I}_M)(\hat{\rho}_{AM}))(t) = I(A:X|M)_{(\mathcal{N}_A(s)\otimes\mathbb{I}_M)(\hat{\sigma}_{AMX}(t))}\;,
\end{equation}
where $\hat{\sigma}_{AMX}(t)$ is as in \autoref{def:Delta}.
\end{lemma}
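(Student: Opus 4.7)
The plan is to unpack both sides of the claimed identity via Definition \ref{def:Delta} and show that the two quantum states on $AMX$ whose conditional mutual informations are being computed actually coincide. Writing $\hat{\rho}'_{AM} := (\mathcal{N}_A(s)\otimes\mathbb{I}_M)(\hat{\rho}_{AM})$, the left hand side is by definition $I(A:X|M)_{\hat{\sigma}'_{AMX}(t)}$, where $\hat{\sigma}'_{AMX}(t)$ has $X$-marginal $p_X(t)$ and $\hat{\sigma}'_{AM|X=\mathbf{x}}(t) = \hat{D}_A(\mathbf{x})\,\hat{\rho}'_{AM}\,\hat{D}_A(\mathbf{x})^\dag$. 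So the lemma reduces to verifying
\[
\hat{\sigma}'_{AMX}(t) \;=\; (\mathcal{N}_A(s)\otimes\mathbb{I}_M\otimes\mathbb{I}_X)\bigl(\hat{\sigma}_{AMX}(t)\bigr).
\]
The $X$-marginal of the right hand side is again $p_X(t)$, since $\mathcal{N}_A(s)$ is trace-preserving and acts trivially on the classical register $X$, matching the left.

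It remains to compare conditionals on $AM$ given $X=\mathbf{x}$, which is:
\[
(\mathcal{N}_A(s)\otimes\mathbb{I}_M)\bigl(\hat{D}_A(\mathbf{x})\,\hat{\rho}_{AM}\,\hat{D}_A(\mathbf{x})^\dag\bigr)
\;\stackrel{?}{=}\;
\hat{D}_A(\mathbf{x})\,(\mathcal{N}_A(s)\otimes\mathbb{I}_M)(\hat{\rho}_{AM})\,\hat{D}_A(\mathbf{x})^\dag.
\]
This is the statement that $\mathcal{N}(s)$ commutes with conjugation by any displacement $\hat{D}(\mathbf{x})$. To prove it I will expand $\mathcal{N}(s)$ by Definition \ref{def:heat} and use the Weyl multiplication law $\hat{D}(\mathbf{y})\hat{D}(\mathbf{x}) = \mathrm{e}^{i\theta(\mathbf{y},\mathbf{x})}\hat{D}(\mathbf{y}+\mathbf{x})$; the phase cancels in the sandwich, yielding $\hat{D}(\mathbf{y})\hat{D}(\mathbf{x})\,\hat{\rho}\,\hat{D}(\mathbf{x})^\dag\hat{D}(\mathbf{y})^\dag = \hat{D}(\mathbf{y}+\mathbf{x})\,\hat{\rho}\,\hat{D}(\mathbf{y}+\mathbf{x})^\dag$, and the two orderings then match after the translation $\mathbf{z}=\mathbf{y}+\mathbf{x}$ in the Gaussian integral, whose measure is invariant under the rescaled shift once one compares both sides as integrals over $\mathbf{z}$ with weight $\mathrm{e}^{-|\mathbf{z}-\mathbf{x}|^2/(2s)}$.

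This commutativity is entirely routine and is essentially the same calculation that underlies \autoref{lem:BN}; no regularity subtlety enters because we are only manipulating convex combinations of unitary conjugations against a Gaussian weight. I expect no genuine obstacle; the only care needed is the bookkeeping of the trivial $X$-register and the observation that the cancellation of the Weyl phase makes $\mathcal{N}$ covariant for the adjoint action of the displacement group, not just for its unitary action. Once both sides of the display above are checked to give identical tripartite states, the equality of conditional mutual informations is immediate.
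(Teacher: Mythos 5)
Your proposal is correct and follows essentially the same route as the paper: both reduce the lemma to the identity $\hat{\tau}_{AMX}(s,t)=(\mathcal{N}_A(s)\otimes\mathbb{I}_{MX})(\hat{\sigma}_{AMX}(t))$ by comparing the conditional states on $AM$ given $X=\mathbf{x}$, the key point being that $\mathcal{N}_A(s)$ commutes with conjugation by $\hat{D}_A(\mathbf{x})$. The only difference is that the paper asserts this commutation without comment, whereas you justify it via the Weyl multiplication law and the cancellation of the phase in the sandwich, which is a correct (and welcome) elaboration rather than a different argument.
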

\begin{proof}
We have
\begin{equation}
\Delta_{A|M}((\mathcal{N}_A(s)\otimes\mathbb{I}_M)(\hat{\rho}_{AM}))(t) = I(A:X|M)_{\hat{\tau}_{AMX}(s,t)}\;,
\end{equation}
where $X$ is as in \autoref{def:Delta}, and $\hat{\tau}_{AMX}(s,t)$ is the quantum state on $AMX$ such that its marginal on $X$ is $p_X(t)$, and for any $\mathbf{x}\in\mathbb{R}^{2n}$
\begin{align}
\hat{\tau}_{AM|X=\mathbf{x}}(s,t) &= \hat{D}_A(\mathbf{x})\,(\mathcal{N}_A(s)\otimes\mathbb{I}_M)(\hat{\rho}_{AM})\,{\hat{D}_A(\mathbf{x})}^\dag\nonumber\\
&= (\mathcal{N}_A(s)\otimes\mathbb{I}_M)\left(\hat{D}_A(\mathbf{x})\,\hat{\rho}_{AM}\,{\hat{D}_A(\mathbf{x})}^\dag\right)\nonumber\\
&= (\mathcal{N}_A(s)\otimes\mathbb{I}_M)(\hat{\sigma}_{AM|X=\mathbf{x}}(t))\;.
\end{align}
Hence for any $t\ge0$
\begin{equation}
\hat{\tau}_{AMX}(s,t) = (\mathcal{N}_A(s)\otimes\mathbb{I}_M)(\hat{\sigma}_{AMX}(t))\;,
\end{equation}
and the claim follows.
\end{proof}

\begin{lemma}\label{lem:DPID}
For any $s,\,t\ge0$
\begin{equation}
\Delta_{A|M}((\mathcal{N}_A(s)\otimes\mathbb{I}_M)(\hat{\rho}_{AM}))(t) \le \Delta_{A|M}(\hat{\rho}_{AM})(t)\;.
\end{equation}
\end{lemma}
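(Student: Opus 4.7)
The plan is to reduce the inequality to the data-processing inequality for quantum conditional mutual information. By \autoref{lem:DN}, we can rewrite
\begin{equation*}
\Delta_{A|M}((\mathcal{N}_A(s)\otimes\mathbb{I}_M)(\hat{\rho}_{AM}))(t) = I(A:X|M)_{(\mathcal{N}_A(s)\otimes\mathbb{I}_{MX})(\hat{\sigma}_{AMX}(t))}\;,
\end{equation*}
while by \autoref{def:Delta}
\begin{equation*}
\Delta_{A|M}(\hat{\rho}_{AM})(t) = I(A:X|M)_{\hat{\sigma}_{AMX}(t)}\;.
\end{equation*}
Here the channel $\mathcal{N}_A(s)$ acts only on the $A$ register, and the identity is understood on both the memory $M$ and the classical register $X$ (with the latter commuting with $\mathcal{N}_A(s)$ since the heat semigroup acts fibrewise over values of $X$).

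The next step is to invoke the data-processing inequality for the quantum conditional mutual information: for any quantum channel $\mathcal{E}$ acting on $A$ and any tripartite state $\hat{\tau}_{AMX}$,
\begin{equation*}
I(A:X|M)_{(\mathcal{E}_A\otimes\mathbb{I}_{MX})(\hat{\tau}_{AMX})} \le I(A:X|M)_{\hat{\tau}_{AMX}}\;.
\end{equation*}
Applied to $\hat{\tau}_{AMX}=\hat{\sigma}_{AMX}(t)$ and $\mathcal{E}=\mathcal{N}_A(s)$, this is exactly the desired bound. The data-processing inequality for conditional mutual information is standard and can be derived, for instance, from the monotonicity of the relative entropy under quantum channels, which is already used repeatedly in the paper.

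The only subtle point is to ensure that the rewriting in \autoref{lem:DN} is compatible with viewing $X$ as a (fixed) classical register on which $\mathcal{N}_A(s)\otimes\mathbb{I}_{MX}$ acts trivially; this is immediate because the classical marginal $p_X(t)$ and the conditional structure are preserved by construction. Thus no genuine analytic difficulty arises, and the inequality follows in two lines once \autoref{lem:DN} is in hand.
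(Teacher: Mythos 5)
Your proof is correct and follows essentially the same route as the paper: rewrite the left-hand side via \autoref{lem:DN} and then apply the data-processing inequality for the quantum conditional mutual information under the local channel $\mathcal{N}_A(s)$ acting on $A$. The extra remarks about the classical register $X$ and the fibrewise action are harmless elaborations of what the paper leaves implicit.
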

\begin{proof}
From \autoref{lem:DN}, the claim is equivalent to
\begin{equation}
I(A:X|M)_{(\mathcal{N}_A(s)\otimes\mathbb{I}_M)(\hat{\sigma}_{AMX}(t))} \le I(A:X|M)_{\hat{\sigma}_{AMX}(t)}\;,
\end{equation}
that follows from the data-processing inequality for the quantum mutual information.
\end{proof}

\begin{lemma}\label{lem:split}
For any $s,\,t\ge0$
\begin{align}
\Delta_{A|M}(\hat{\rho}_{AM})(s+t) &= \Delta_{A|M}(\hat{\rho}_{AM})(s) + \Delta_{A|M}((\mathcal{N}_A(s)\otimes\mathbb{I}_M)(\hat{\rho}_{AM}))(t)\nonumber\\
&\ge \Delta_{A|M}(\hat{\rho}_{AM})(s)\;.
\end{align}
\end{lemma}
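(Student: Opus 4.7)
The plan is to reduce the identity to the integral de Bruijn identity (\autoref{thm:dBG}) combined with the semigroup property of the heat evolution (\autoref{def:heat}), and then read off the inequality from the nonnegativity of $\Delta_{A|M}$, which is built into its definition as a quantum conditional mutual information.

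More concretely, I would introduce the shorthand $S_\tau := S(A|M)_{(\mathcal{N}_A(\tau)\otimes\mathbb{I}_M)(\hat{\rho}_{AM})}$ for $\tau\ge 0$, so that $S_0 = S(A|M)_{\hat{\rho}_{AM}}$. Applying \autoref{thm:dBG} to $\hat{\rho}_{AM}$ at time $s+t$ gives
\begin{equation*}
\Delta_{A|M}(\hat{\rho}_{AM})(s+t) = S_{s+t} - S_0.
\end{equation*}
Applying \autoref{thm:dBG} to $\hat{\rho}_{AM}$ at time $s$ gives $\Delta_{A|M}(\hat{\rho}_{AM})(s) = S_s - S_0$. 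For the remaining term I would apply \autoref{thm:dBG} to the state $(\mathcal{N}_A(s)\otimes\mathbb{I}_M)(\hat{\rho}_{AM})$ at time $t$, and then invoke the semigroup property $\mathcal{N}_A(t)\circ\mathcal{N}_A(s) = \mathcal{N}_A(s+t)$ from \autoref{def:heat} to obtain
\begin{equation*}
\Delta_{A|M}\bigl((\mathcal{N}_A(s)\otimes\mathbb{I}_M)(\hat{\rho}_{AM})\bigr)(t) = S_{s+t} - S_s.
\end{equation*}
The equality in the lemma is then the telescoping identity $(S_{s+t}-S_0) = (S_s - S_0) + (S_{s+t} - S_s)$.

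For the inequality, I would observe that $\Delta_{A|M}((\mathcal{N}_A(s)\otimes\mathbb{I}_M)(\hat{\rho}_{AM}))(t)\ge 0$ directly from \autoref{def:Delta}, since by definition it equals the quantum conditional mutual information $I(A:X|M)$ evaluated on a particular tripartite state, and conditional mutual information is nonnegative (with the convention $\Delta_{A|M}(\cdot)(0)=0$ handling the boundary case).

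I do not anticipate a real obstacle here: everything is a bookkeeping consequence of the de Bruijn identity and the semigroup law. The only mild subtlety is to make sure the additive manipulation $S_{s+t}-S_0 = (S_s-S_0)+(S_{s+t}-S_s)$ is legitimate, i.e.\ that we are not in an $\infty-\infty$ situation; but this is already implicit in the validity of \autoref{thm:dBG}, which guarantees each $\Delta$ is a well-defined element of $[0,\infty]$, and the identity makes sense as an equality in $[0,\infty]$ by monotonicity of $\tau\mapsto S_\tau$ (itself a consequence of \autoref{thm:dBG} applied at intermediate times).
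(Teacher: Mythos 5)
Your proposal is correct and follows exactly the paper's route: the paper proves \autoref{lem:split} by the same telescoping application of the integral de Bruijn identity (\autoref{thm:dBG}) together with the semigroup law, and the inequality from the nonnegativity of conditional mutual information. Your write-up merely makes explicit the steps the paper leaves implicit in its one-line proof.
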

\begin{proof}
Follows from \autoref{thm:dBG}.
\end{proof}

\begin{theorem}[regularity of quantum integral conditional Fisher information]\label{thm:conc}
For any quantum state $\hat{\rho}_{AM}$ on $AM$ such that
\begin{equation}
\mathrm{Tr}_A\left[\hat{H}_A\,\hat{\rho}_A\right]<\infty\;,\qquad S(\hat{\rho}_M)<\infty\;,
\end{equation}
the quantum integral conditional Fisher information $S(A|M)_{(\mathcal{N}(t)\otimes\mathbb{I}_M)(\hat{\rho}_{AM})}$ is a continuous, increasing and concave function of time.
\end{theorem}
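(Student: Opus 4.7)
The plan is to apply the quantum integral conditional de Bruijn identity (\autoref{thm:dBG}) to rewrite
\begin{equation*}
f(t) := S(A|M)_{(\mathcal{N}(t)\otimes\mathbb{I}_M)(\hat{\rho}_{AM})} = S(A|M)_{\hat{\rho}_{AM}} + \Delta_{A|M}(\hat{\rho}_{AM})(t),
\end{equation*}
so that all three regularity statements for $f$ reduce to the corresponding properties of $g(t) := \Delta_{A|M}(\hat{\rho}_{AM})(t)$. Monotonicity of $g$ (hence of $f$) is immediate from \autoref{lem:split}, which already gives $g(s+t)\ge g(s)$ for all $s,t\ge 0$.

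For concavity I would use the equivalent \emph{decreasing-increments} characterisation: for every fixed $h\ge 0$, the map $t\mapsto g(t+h)-g(t)$ should be non-increasing. By \autoref{lem:split},
\begin{equation*}
g(t+h)-g(t) \;=\; \Delta_{A|M}\bigl((\mathcal{N}_A(t)\otimes\mathbb{I}_M)(\hat{\rho}_{AM})\bigr)(h).
\end{equation*}
Given $0\le t_1 \le t_2$, the semigroup identity $\mathcal{N}(t_2) = \mathcal{N}(t_2-t_1)\circ \mathcal{N}(t_1)$ lets me rewrite the increment at $t_2$ as the integral conditional Fisher information (at parameter $h$) of the state $(\mathcal{N}_A(t_1)\otimes\mathbb{I}_M)(\hat{\rho}_{AM})$ further evolved by $\mathcal{N}(t_2-t_1)$, and \autoref{lem:DPID} then bounds this above by the same quantity without the extra evolution, which is precisely the increment at $t_1$. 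Hence $g(t_2+h)-g(t_2)\le g(t_1+h)-g(t_1)$, which gives concavity of $g$ and therefore of $f$.

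For continuity, concavity together with monotonicity on $[0,\infty)$ already implies continuity on $(0,\infty)$. At the boundary point $t=0$, right-continuity is exactly the content of \autoref{lem:cont} applied to $\hat{\rho}_{AM}$ itself; alternatively, to obtain continuity at an interior point $t_0>0$ directly, one applies \autoref{lem:cont} to the shifted state $(\mathcal{N}(t_0)\otimes\mathbb{I}_M)(\hat{\rho}_{AM})$, checking that its hypotheses are preserved: the heat semigroup changes the mean energy on $A$ by $n\,t_0$ (still finite) and leaves the marginal on $M$ unchanged, so $S(\hat{\rho}_M)<\infty$ is untouched.

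There is no genuinely hard step: the whole proof is a short assembly of \autoref{thm:dBG}, \autoref{lem:cont}, \autoref{lem:DPID} and \autoref{lem:split}. The only mildly delicate point is the concavity argument, where one must package the semigroup law and the data-processing bound of \autoref{lem:DPID} in the right order to produce the decreasing-increments inequality; once that is phrased correctly, concavity is a one-line consequence.
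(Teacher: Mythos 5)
Your proof is correct and follows essentially the same route as the paper: monotonicity from \autoref{lem:split}, concavity by combining \autoref{lem:split} with \autoref{lem:DPID} (your decreasing-increments formulation is just a mild repackaging of the paper's midpoint-concavity inequality \eqref{eq:claimconv}--\eqref{eq:diff}), and continuity from \autoref{lem:cont} applied to the heat-semigroup--shifted state. The only difference is cosmetic, plus your welcome explicit check that the hypotheses of \autoref{lem:cont} are preserved under the evolution, which the paper leaves implicit.
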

\begin{proof}
The continuity follows from \autoref{lem:cont} and \autoref{lem:split}.
From \autoref{lem:split}, $S(A|M)_{(\mathcal{N}(t)\otimes\mathbb{I}_M)(\hat{\rho}_{AM})}$ is increasing.
We then have to prove that for any $s,\,t\ge0$
\begin{equation}\label{eq:claimconv}
\Delta_{A|M}(\hat{\rho}_{AM})\left(\frac{s+t}{2}\right) \overset{?}{\ge} \frac{\Delta_{A|M}(\hat{\rho}_{AM})(s) + \Delta_{A|M}(\hat{\rho}_{AM})(t)}{2}\;.
\end{equation}
Without lost of generality we can assume $s\le t$.
We can rephrase \eqref{eq:claimconv} as
\begin{align}
&\Delta_{A|M}(\hat{\rho}_{AM})\left(\frac{s+t}{2}\right) - \Delta_{A|M}(\hat{\rho}_{AM})(s)\nonumber\\
&\overset{?}{\ge} \Delta_{A|M}(\hat{\rho}_{AM})(t) - \Delta_{A|M}(\hat{\rho}_{AM})\left(\frac{s+t}{2}\right)\;,
\end{align}
that thanks to \autoref{lem:split} is equivalent to
\begin{align}\label{eq:diff}
&\Delta_{A|M}(\hat{\rho}_{AM}(s))\left(\frac{t-s}{2}\right)\nonumber\\
&\overset{?}{\ge} \Delta_{A|M}\left(\left(\mathcal{N}_A\left(\frac{t-s}{2}\right)\otimes\mathbb{I}_M\right)(\hat{\rho}_{AM}(s))\right)\left(\frac{t-s}{2}\right)\;,
\end{align}
where
\begin{equation}
\hat{\rho}_{AM}(s) := (\mathcal{N}_A(s)\otimes\mathbb{I}_M)(\hat{\rho}_{AM})\;.
\end{equation}
Finally, \eqref{eq:diff} holds from \autoref{lem:DPID}.
\end{proof}

\section{Quantum conditional Fisher information and quantum Stam inequality}\label{sec:Stam}
In this Section, we derive the quantum conditional Fisher information and the quantum conditional de Bruijn identity from their integral versions presented in \autoref{sec:Fisher}, and we prove that the quantum conditional Fisher information satisfies the quantum Stam inequality.

\begin{definition}[quantum conditional Fisher information]\label{defn:J}
Let $\hat{\rho}_{AM}$ be a quantum state on $AM$ satisfying the hypotheses of \autoref{thm:conc}.
The Fisher information of $A$ conditioned on $M$ is
\begin{equation}\label{eq:defJ}
J(A|M)_{\hat{\rho}_{AM}} := \lim_{t\to 0}\frac{\Delta_{A|M}(\hat{\rho}_{AM})(t)}{t}\;.
\end{equation}
\end{definition}
\begin{remark}
Since from \autoref{thm:conc} the function $t\mapsto \Delta_{A|M}(\hat{\rho}_{AM})(t)$ is continuous and concave, the limit in \eqref{eq:defJ} always exists (finite or infinite).
\end{remark}
\begin{remark}
\autoref{defn:J} is equivalent to the definition of \cite{koenig2015conditional}, eq. (53).
\end{remark}

\begin{proposition}[quantum conditional de Bruijn identity]\label{prop:ddB}
The quantum conditional Fisher information coincides with the time derivative of the quantum conditional entropy under the heat semigroup evolution:
\begin{equation}
J(A|M)_{\hat{\rho}_{AM}} = \left.\frac{\mathrm{d}}{\mathrm{d}t}S(A|M)_{(\mathcal{N}_A(t)\otimes\mathbb{I}_M)(\hat{\rho}_{AM})}\right|_{t=0}\;.
\end{equation}
\end{proposition}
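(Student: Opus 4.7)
The plan is to derive this differential identity as an immediate consequence of the integral de Bruijn identity (\autoref{thm:dBG}) together with the regularity established in \autoref{thm:conc}. The strategy is purely to identify the defining limit of $J(A|M)$ with the difference quotient defining the time derivative of the conditional entropy at $t=0$.

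First, I would invoke \autoref{thm:dBG}, which for every $t>0$ gives the identity
\begin{equation*}
\frac{\Delta_{A|M}(\hat{\rho}_{AM})(t)}{t} = \frac{S(A|M)_{(\mathcal{N}_A(t)\otimes\mathbb{I}_M)(\hat{\rho}_{AM})} - S(A|M)_{\hat{\rho}_{AM}}}{t}\;.
\end{equation*}
The right-hand side is by definition the forward difference quotient at $t=0$ of the function $t\mapsto S(A|M)_{(\mathcal{N}_A(t)\otimes\mathbb{I}_M)(\hat{\rho}_{AM})}$, understood (as the heat semigroup is defined only for $t\ge 0$) as the one-sided derivative.

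Second, I would take the limit $t\to 0^+$. By \autoref{defn:J}, the left-hand side converges to $J(A|M)_{\hat{\rho}_{AM}}$, the limit existing in $[0,\infty]$ because $t\mapsto \Delta_{A|M}(\hat{\rho}_{AM})(t)$ is concave and vanishes at $t=0$ by \autoref{thm:conc}, so the difference quotient is monotonically non-increasing. Consequently the right-hand side converges as well, and equals the one-sided time derivative of the quantum conditional entropy at $t=0$. This yields the claimed identity.

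There is essentially no obstacle here beyond bookkeeping: the non-trivial regularity work has already been carried out in \autoref{thm:conc}, which is precisely what guarantees that the difference quotient admits a (possibly infinite) limit. The only point worth flagging is that the ``derivative'' in the statement is, strictly speaking, the right derivative at $t=0$, which is the natural notion of derivative for the heat semigroup evolution that is defined only for non-negative times.
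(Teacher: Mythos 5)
Your argument is exactly the paper's proof (which is stated in one line as ``Follows from \autoref{thm:dBG}''): divide the integral de Bruijn identity by $t$, recognize the difference quotient, and pass to the limit $t\to0^+$ using the concavity from \autoref{thm:conc} to guarantee existence. Correct, and the same approach, just written out in full.
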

\begin{proof}
Follows from \autoref{thm:dBG}.
\end{proof}
\begin{remark}
\autoref{prop:ddB} had been proven in \cite{koenig2015conditional}, Theorem 7.3 in the particular case where $\hat{\rho}_{AM}$ is a quantum Gaussian state.
\end{remark}

\begin{theorem}[quantum conditional Stam inequality]\label{thm:Stam}
Let $A$, $B$ and $C$ be Gaussian quantum systems with $n$ modes, $M$ a quantum system, and $\mathcal{B}_\eta:AB\to C$ the beam-splitter with transmissivity $0\le\eta\le1$ or the squeezing with parameter $\eta\ge1$.
Let $\hat{\rho}_{ABM}$ be a quantum state on $ABM$ such that
\begin{equation}
\mathrm{Tr}_{AB}\left[\left(\hat{H}_A+\hat{H}_B\right)\hat{\rho}_{AB}\right]<\infty\;,\qquad S(\hat{\rho}_M)<\infty\;,
\end{equation}
and let us suppose that $A$ and $B$ are conditionally independent given $M$:
\begin{equation}
I(A:B|M)_{\hat{\rho}_{ABM}} = 0\;,
\end{equation}
Then, for any $0\le\lambda\le1$ the quantum linear conditional Stam inequality holds:
\begin{equation}\label{eq:Stamlin}
J(C|M)_{\hat{\rho}_{CM}} \le \frac{\lambda^2}{\eta}J(A|M)_{\hat{\rho}_{AM}} + \frac{(1-\lambda)^2}{|1-\eta|}J(B|M)_{\hat{\rho}_{BM}}\;,
\end{equation}
where
\begin{equation}
\hat{\rho}_{CM} := (\mathcal{B}_\eta\otimes\mathbb{I}_M)(\hat{\rho}_{ABM})\;.
\end{equation}
The quantum conditional Stam inequality follows minimizing over $\lambda$ the right-hand side of \eqref{eq:Stamlin}:
\begin{equation}\label{eq:Stam}
\frac{1}{J(C|M)_{\hat{\rho}_{CM}}} \ge \frac{\eta}{J(A|M)_{\hat{\rho}_{AM}}} + \frac{|1-\eta|}{J(B|M)_{\hat{\rho}_{BM}}}\;.
\end{equation}
\end{theorem}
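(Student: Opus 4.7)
The plan is to first prove the linear Stam inequality \eqref{eq:Stamlin}; the Stam inequality \eqref{eq:Stam} then follows by the elementary optimization over $\lambda$ announced in the statement. For \eqref{eq:Stamlin}, the strategy is to establish an \emph{integral} version
\begin{equation*}
\Delta_{C|M}(\hat\rho_{CM})(u) \le \Delta_{A|M}(\hat\rho_{AM})\!\left(\tfrac{\lambda^2 u}{\eta}\right) + \Delta_{B|M}(\hat\rho_{BM})\!\left(\tfrac{(1-\lambda)^2 u}{|1-\eta|}\right)\qquad (*)
\end{equation*}
for every $u>0$, and then to divide by $u$ and let $u\to 0^+$. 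The three limits exist as right-derivatives at $t=0$ of the concave, increasing functions $t\mapsto\Delta_{X|M}(\hat\rho_{XM})(t)$ from \autoref{thm:conc}, and by \autoref{defn:J} they equal $J(C|M)$, $(\lambda^2/\eta)J(A|M)$ and $((1-\lambda)^2/|1-\eta|)J(B|M)$ respectively.

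The key construction for $(*)$ is a coupled Gaussian noise on $AB$ whose image through $\mathcal{B}_\eta$ is a single Gaussian noise on $C$. Fix $\lambda\in[0,1]$ and $u>0$, and let $X_C$ be the classical Gaussian on $\mathbb{R}^{2n}$ with density $\mathrm{d}p_X(u)$ as in \autoref{def:Delta}. I would introduce the state $\hat\sigma_{ABMX_C}$ with marginal $p_X(u)$ on $X_C$ and, for each $\mathbf{z}\in\mathbb{R}^{2n}$,
\begin{equation*}
\hat\sigma_{ABM|X_C=\mathbf{z}} := \hat D_A\!\left(\tfrac{\lambda}{\sqrt{\eta}}\,\mathbf{z}\right)\hat D_B\!\left(\tfrac{1-\lambda}{\sqrt{|1-\eta|}}\,K\mathbf{z}\right)\hat\rho_{ABM}\,\hat D_A^\dag\,\hat D_B^\dag,
\end{equation*}
where $K=I$ for the beam-splitter and $K=T$ for the squeezing, so that $K^2=I$. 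A direct computation using \autoref{lem:BD} then shows that the coefficients collapse to $\mathbf{z}$, so that the image $\hat\tau_{CMX_C}:=(\mathcal{B}_\eta\otimes\mathbb{I}_{MX_C})(\hat\sigma_{ABMX_C})$ satisfies $\hat\tau_{CM|X_C=\mathbf{z}}=\hat D_C(\mathbf{z})\hat\rho_{CM}\hat D_C(\mathbf{z})^\dag$. This is precisely the state of \autoref{def:Delta} for the integral Fisher information of $C$ at time $u$, and therefore $I(C:X_C|M)_{\hat\tau}=\Delta_{C|M}(\hat\rho_{CM})(u)$.

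Next I would exploit the conditional independence hypothesis. Because the displacements act locally on $A$ and on $B$, the assumption $I(A:B|M)_{\hat\rho_{ABM}}=0$ lifts conditionally to $I(A:B|M,X_C)_{\hat\sigma}=0$. Unfolding definitions gives the algebraic identity
\begin{equation*}
I(A:X_C|M)_{\hat\sigma}+I(B:X_C|M)_{\hat\sigma}-I(AB:X_C|M)_{\hat\sigma} \;=\; I(A:B|M)_{\hat\sigma_{ABM}}\;\ge\;0,
\end{equation*}
and combined with the data-processing inequality $I(C:X_C|M)_{\hat\tau}\le I(AB:X_C|M)_{\hat\sigma}$ applied to the channel $\mathcal{B}_\eta$ it yields $\Delta_{C|M}(\hat\rho_{CM})(u)\le I(A:X_C|M)_{\hat\sigma}+I(B:X_C|M)_{\hat\sigma}$. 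The invertible linear reparametrizations $Y_A=(\lambda/\sqrt{\eta})X_C$ and $Y_B=((1-\lambda)/\sqrt{|1-\eta|})KX_C$ are Gaussians of covariance $(\lambda^2 u/\eta)I_{2n}$ and $((1-\lambda)^2 u/|1-\eta|)I_{2n}$ respectively (using the orthogonality of $K$), and identify the marginal states $\hat\sigma_{AMX_C}$, $\hat\sigma_{BMX_C}$ with the ones of \autoref{def:Delta} at those times; this converts the last display into $(*)$.

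The main obstacle I foresee is not conceptual but bookkeeping: one must track carefully the presence of the time-reversal matrix $T$ in the squeezing case of \autoref{lem:BD} to verify that the coefficients telescope to $\mathbf{z}$, and one must handle the degenerate boundary cases $\lambda\in\{0,1\}$ or $\eta\in\{0,1\}$ for which one of the two integral Fisher informations is evaluated at $0$ and vanishes by \autoref{def:Delta}. Once $(*)$ is in hand, the passage to the limit $u\to 0^+$ is immediate by \autoref{thm:conc} and \autoref{defn:J}, producing \eqref{eq:Stamlin}, and the minimization of its right-hand side over $\lambda\in[0,1]$ by elementary calculus yields the quantum conditional Stam inequality \eqref{eq:Stam}.
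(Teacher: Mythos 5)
Your proposal is correct and follows essentially the same route as the paper's proof: the same coupled displacement construction $\hat{D}_A(\lambda\mathbf{z}/\sqrt{\eta})\otimes\hat{D}_B((1-\lambda)K\mathbf{z}/\sqrt{|1-\eta|})$ pushed through $\mathcal{B}_\eta$ via \autoref{lem:BD}, the same lifting of $I(A{:}B|M)=0$ to the conditioned ensemble, the same chain-rule-plus-data-processing step, and the same passage from the integral inequality to \eqref{eq:Stamlin} by differentiating at $t=0$ using \autoref{thm:conc}. No gaps.
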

\begin{remark}
The linear Stam inequality \eqref{eq:Stamlin} had been proven in the particular case $\lambda=\eta$ in \cite{koenig2015conditional}, eq. (62).
\end{remark}
\begin{proof}
We will prove the following inequality for the quantum integral conditional Fisher information:
\begin{equation}\label{eq:StamD}
\Delta_{C|M}(\hat{\rho}_{CM})(t) \le \Delta_{A|M}(\hat{\rho}_{AM})\left(\frac{\lambda^2\,t}{\eta}\right) + \Delta_{B|M}(\hat{\rho}_{BM})\left(\frac{(1-\lambda)^2\,t}{|1-\eta|}\right)\;.
\end{equation}
The quantum linear conditional Stam inequality \eqref{eq:Stamlin} follows taking the derivative of \eqref{eq:StamD} in $t=0$.
The quantum conditional Stam inequality \eqref{eq:Stam} follows choosing
\begin{equation}
\lambda = \frac{\eta\,J(B|M)_{\hat{\rho}_{BM}}}{\eta\,J(B|M)_{\hat{\rho}_{BM}} + |1-\eta|\,J(A|M)_{\hat{\rho}_{AM}}}\;,
\end{equation}
that minimizes the right-hand side of \eqref{eq:Stamlin}.

For any $t\ge0$
\begin{equation}
\Delta_{C|M}(\hat{\rho}_{CM})(t) = I(C:Z|M)_{\hat{\sigma}_{CMZ}(t)}\;,
\end{equation}
where $Z$ is a Gaussian random variable with values in $\mathbb{R}^{2n}$ and probability density function
\begin{equation}
\mathrm{d}p_{Z}(t)(\mathbf{z}) = \mathrm{e}^{-\frac{|\mathbf{z}|^2}{2t}}\frac{\mathrm{d}^{2n}z}{(2\pi\,t)^n}\;,\qquad \mathbf{z}\in\mathbb{R}^{2n}\;,
\end{equation}
and $\hat{\sigma}_{CMZ}(t)$ is the quantum state on $CMZ$ such that its marginal on $Z$ is $p_{Z}(t)$ and for any $\mathbf{z}\in\mathbb{R}^{2n}$
\begin{equation}
\hat{\sigma}_{CM|Z=\mathbf{z}}(t) = \hat{D}_C(\mathbf{z})\,\hat{\rho}_{CM}\,{\hat{D}_C(\mathbf{z})}^\dag\;.
\end{equation}
We define the quantum state $\hat{\sigma}_{ABMZ}(t)$ on $ABMZ$ such that its marginal on $Z$ is $p_{Z}(t)$ and for any $\mathbf{z}\in\mathbb{R}^{2n}$
\begin{equation}
\hat{\sigma}_{ABM|Z=\mathbf{z}} = \hat{D}_A\left(\frac{\lambda\mathbf{z}}{\sqrt{\eta}}\right)\hat{D}_B\left(\frac{(1-\lambda)\mathbf{z}}{\sqrt{1-\eta}}\right)\hat{\rho}_{ABM} {\hat{D}_A\left(\frac{\lambda\mathbf{z}}{\sqrt{\eta}}\right)}^\dag{\hat{D}_B\left(\frac{(1-\lambda)\mathbf{z}}{\sqrt{1-\eta}}\right)}^\dag
\end{equation}
if $0\le\eta\le1$, and
\begin{align}
&\hat{\sigma}_{ABM|Z=\mathbf{z}}\nonumber\\
&= \hat{D}_A\left(\frac{\lambda\mathbf{z}}{\sqrt{\eta}}\right)\hat{D}_B\left(\frac{(1-\lambda)T\mathbf{z}}{\sqrt{\eta-1}}\right)\hat{\rho}_{ABM} {\hat{D}_A\left(\frac{\lambda\mathbf{z}}{\sqrt{\eta}}\right)}^\dag{\hat{D}_B\left(\frac{(1-\lambda)T\mathbf{z}}{\sqrt{\eta-1}}\right)}^\dag
\end{align}
if $\eta\ge1$, where $T$ is the time-reversal matrix defined in \eqref{eq:defT}.
We then have for any $t\ge0$
\begin{equation}
\hat{\sigma}_{CMZ}(t) = (\mathcal{B}_\eta\otimes\mathbb{I}_{MZ})(\hat{\sigma}_{ABMZ}(t))\;.
\end{equation}
We have
\begin{align}
I(A:B|MZ)_{\hat{\sigma}_{ABMZ}} &= \int_{\mathbb{R}^{2n}} I(A:B|M)_{\hat{\sigma}_{ABM|Z=\mathbf{z}}}\,\mathrm{d}p_Z(\mathbf{z})\nonumber\\
&= \int_{\mathbb{R}^{2n}} I(A:B|M)_{\hat{\rho}_{ABM}}\,\mathrm{d}p_Z(\mathbf{z}) = 0\;.
\end{align}
We then have from the data-processing inequality for the quantum mutual information
\begin{align}\label{eq:inI}
I(C:Z|M)_{\hat{\sigma}_{CMZ}} &\le I(AB:Z|M)_{\hat{\sigma}_{ABMZ}}\nonumber\\
&= I(A:Z|M)_{\hat{\sigma}_{AMZ}} + I(B:Z|M)_{\hat{\sigma}_{BMZ}}\nonumber\\
&\phantom{=} + I(A:B|MZ)_{\hat{\sigma}_{ABMZ}} - I(A:B|M)_{\hat{\sigma}_{ABM}}\nonumber\\
&\le I(A:Z|M)_{\hat{\sigma}_{AMZ}} + I(B:Z|M)_{\hat{\sigma}_{BMZ}}\;.
\end{align}
Proceeding as in the proof of \autoref{thm:dBG} we get
\begin{equation}
I(A:Z|M)_{\hat{\sigma}_{AMZ}(t)} = S(A|M)_{\hat{\sigma}_{AM}(t)} - S(A|M)_{\hat{\rho}_{AM}}\;,
\end{equation}
where
\begin{align}
\hat{\sigma}_{AM}(t) &= \int_{\mathbb{R}^{2n}}\hat{D}_A\left(\frac{\lambda\,\mathbf{z}}{\sqrt{\eta}}\right)\, \hat{\rho}_{AM}\,{\hat{D}_A\left(\frac{\lambda\,\mathbf{z}}{\sqrt{\eta}}\right)}^\dag \,\mathrm{e}^{-\frac{|\mathbf{z}|^2}{2t}}\frac{\mathrm{d}^{2n}z}{(2\pi\,t)^n}\nonumber\\
&= \left(\mathcal{N}_A\left(\frac{\lambda^2\,t}{\eta}\right)\otimes\mathbb{I}_M\right)(\hat{\rho}_{AM})\;,
\end{align}
hence
\begin{equation}
I(A:Z|M)_{\hat{\sigma}_{AMZ}(t)} = \Delta_{A|M}(\hat{\rho}_{AM})\left(\frac{\lambda^2\,t}{\eta}\right)\;.
\end{equation}
We can analogously show that
\begin{equation}
I(B:Z|M)_{\hat{\sigma}_{BMZ}(t)} = \Delta_{B|M}(\hat{\rho}_{BM})\left(\frac{(1-\lambda)^2\,t}{|1-\eta|}\right)\;,
\end{equation}
and \eqref{eq:inI} becomes the claim \eqref{eq:StamD}.
\end{proof}

\section{Universal asymptotic scaling of quantum conditional entropy}\label{sec:scaling}
In this Section, we prove that the quantum conditional entropy has a universal asymptotic scaling in the infinite-time limit under the heat semigroup evolution (\autoref{thm:scaling}).
The proof is based on the following more general result, that provides a new universal lower bound for the conditional entropy of the output of any quantum Gaussian channel.
\begin{theorem}[universal lower bound for quantum conditional entropy]\label{thm:liminf}
Let $A$, $B$ be quantum Gaussian systems with $m$ and $n$ modes, respectively, and $\Phi:A\to B$ a quantum Gaussian channel.
Let $\hat{\rho}_{AM}$ be a quantum state on $AM$ such that
\begin{equation}
\mathrm{Tr}_A\left[\hat{H}_A\,\hat{\rho}_A\right]<\infty\;,\qquad S(\hat{\rho}_M)<\infty\;.
\end{equation}
Then, for any quantum system $M$ and any joint quantum state $\hat{\rho}_{AM}$
\begin{equation}
S(B|M)_{(\Phi\otimes\mathbb{I}_M)(\hat{\rho}_{AM})} \ge \lim_{\nu\to\infty}S(B|A')_{(\Phi\otimes\mathbb{I}_{A'})(\hat{\omega}_{AA'}(\nu))}\;,
\end{equation}
where $A'$ is a Gaussian quantum system with $m$ modes, and for any $\nu\ge\frac{1}{2}$, $\hat{\omega}_{AA'}(\nu)$ is a purification of the thermal Gaussian state $\hat{\omega}_A(\nu)$ on $A$ with covariance matrix $\nu\,I_{2m}$.
\end{theorem}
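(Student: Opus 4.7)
The plan is to reduce the claim, via purification and strong subadditivity, to an inequality that only involves the marginal $\hat{\rho}_A$, and then to prove that inequality using monotonicity of the quantum relative entropy against the thermal state $\hat{\omega}_A(\nu)$.

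First, let $|\Psi\rangle_{AMR}$ be a purification of $\hat{\rho}_{AM}$ and let $V_\Phi : A \to BE$ be a Stinespring isometry for $\Phi$, so that $\Phi^c(\cdot) = \mathrm{Tr}_B[V_\Phi (\cdot) V_\Phi^\dagger]$ is the complementary channel. Strong subadditivity gives $S(B|M) \ge S(B|MR)$. Since $(V_\Phi \otimes \mathbb{I}_{MR})|\Psi\rangle$ is pure on $BEMR$, the standard purification identities yield
$$
S(B|MR) = S(E) - S(BE) = S(\Phi^c(\hat{\rho}_A)) - S(\hat{\rho}_A).
$$
The same identity applied to the (already pure) thermal state $\hat{\omega}_{AA'}(\nu)$ gives $S(B|A')_{(\Phi\otimes\mathbb{I})(\hat{\omega}_{AA'}(\nu))} = S(\Phi^c(\hat{\omega}_A(\nu))) - S(\hat{\omega}_A(\nu))$, so the theorem reduces to
$$
S(\Phi^c(\hat{\rho}_A)) - S(\hat{\rho}_A) \;\ge\; \lim_{\nu \to \infty}\bigl[S(\Phi^c(\hat{\omega}_A(\nu))) - S(\hat{\omega}_A(\nu))\bigr].
$$

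Second, I apply monotonicity of the relative entropy under $\Phi^c$, namely $S(\hat{\rho}_A \,\|\, \hat{\omega}_A(\nu)) \ge S(\Phi^c(\hat{\rho}_A) \,\|\, \Phi^c(\hat{\omega}_A(\nu)))$. Because $\hat{\omega}_A(\nu)$ is Gaussian and $\Phi^c$ sends Gaussians to Gaussians, both $\log \hat{\omega}_A(\nu)$ and $\log \Phi^c(\hat{\omega}_A(\nu))$ are explicit quadratic polynomials in the quadratures. Expanding the two relative entropies gives the pointwise bound
$$
S(\Phi^c(\hat{\rho}_A)) - S(\hat{\rho}_A) \;\ge\; \mathrm{Tr}[\hat{\rho}_A \log \hat{\omega}_A(\nu)] - \mathrm{Tr}[\Phi^c(\hat{\rho}_A) \log \Phi^c(\hat{\omega}_A(\nu))],
$$
valid for every $\nu > 0$, whose right-hand side depends on $\hat{\rho}_A$ only through its first moments and covariance matrix. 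Substituting $\hat{\rho}_A = \hat{\omega}_A(\nu)$ turns the bound into an equality, confirming that the thermal state saturates it at every $\nu$.

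Third, I let $\nu \to \infty$ and verify that the right-hand side of the pointwise bound converges to $\lim_\nu\bigl[S(\Phi^c(\hat{\omega}_A(\nu))) - S(\hat{\omega}_A(\nu))\bigr]$. The inverse-temperature parameter of $\hat{\omega}_A(\nu)$ is $\beta(\nu) = \log\frac{\nu+1/2}{\nu-1/2} \sim 1/\nu$, while $\mathrm{Tr}[\hat{H}_A \hat{\omega}_A(\nu)] \sim m\nu$; analogously, the output covariance $\nu X^c (X^c)^{\!\top} + Y^c$ has symplectic eigenvalues growing linearly in $\nu$, so the quadratic coefficients of $\log \Phi^c(\hat{\omega}_A(\nu))$ scale as $1/\nu$ on the non-degenerate subspace. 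Since $\hat{\rho}_A$ has bounded covariance and first moments by the energy hypothesis, the $O(1/\nu)$ quadratic forms evaluated on the moments of $\hat{\rho}_A$ vanish in the limit, so the bound evaluated at $\hat{\rho}_A$ and at $\hat{\omega}_A(\nu)$ differ by an $o(1)$ amount as $\nu \to \infty$. Combining the three steps yields the theorem.

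The main obstacle lies in this asymptotic matching: each of $S(\Phi^c(\hat{\omega}_A(\nu)))$ and $S(\hat{\omega}_A(\nu))$ diverges logarithmically in $\nu$, so one must track the precise cancellations between the two cross terms and establish $o(1)$ error bounds that are uniform given only a finite-energy assumption on $\hat{\rho}_A$. A further technical subtlety is the potential rank deficiency of $X^c (X^c)^{\!\top}$, which would leave $\Phi^c(\hat{\omega}_A(\nu))$ singular along certain symplectic directions for every $\nu$; this degenerate case will likely require a perturbative regularization of the channel or a decomposition of $\Phi^c$ onto its non-degenerate subspace before taking the limit.
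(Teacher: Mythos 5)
Your first step (purify $\hat{\rho}_{AM}$, apply strong subadditivity, and use the complementary channel to reduce the claim to $S(\Phi^c(\hat{\rho}_A))-S(\hat{\rho}_A)\ge\lim_{\nu\to\infty}\bigl[S(\Phi^c(\hat{\omega}_A(\nu)))-S(\hat{\omega}_A(\nu))\bigr]$) is correct and is essentially the paper's own reduction, which invokes concavity of the conditional entropy instead of strong subadditivity. The gap is in the limit $\nu\to\infty$ of your pointwise bound, and it is not a removable technicality: the asymptotic matching you assert is false, and the method provably loses an additive constant. Writing $\log\hat{\omega}_A(\nu)=-\beta(\nu)\hat{H}_A-\ln Z(\nu)$ one finds
\begin{equation*}
\mathrm{Tr}\left[\hat{\rho}_A\log\hat{\omega}_A(\nu)\right] = -S(\hat{\omega}_A(\nu)) + \beta(\nu)\left(\mathrm{Tr}\left[\hat{H}_A\,\hat{\omega}_A(\nu)\right] - \mathrm{Tr}\left[\hat{H}_A\,\hat{\rho}_A\right]\right)\;,
\end{equation*}
and the correction term tends to $m$ (one nat per input mode), not to $0$: the scaling $\beta(\nu)\sim1/\nu$ kills the term involving $\hat{\rho}_A$, but $\beta(\nu)\,\mathrm{Tr}[\hat{H}_A\,\hat{\omega}_A(\nu)]\sim\beta(\nu)\,m\nu\to m$. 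The analogous correction on the output side tends to $\sum_k\lim_{\nu\to\infty}\mu_k(\nu)\ln\frac{\mu_k(\nu)+1/2}{\mu_k(\nu)-1/2}$, i.e.\ one nat per divergent symplectic eigenvalue of $\Phi^c(\hat{\omega}_A(\nu))$, and this count need not equal $m$. In the only case the paper actually needs ($\Phi=\mathcal{N}(t)$ on one mode), the complementary output has two symplectic eigenvalues both growing like $\sqrt{\nu t}$, so your bound converges to $\ln t$ instead of the required $\ln t+1$; propagated through \autoref{thm:scaling} into \autoref{thm:EPI}, this deficit of $n$ nats weakens the final Entropy Power Inequality by a factor $\mathrm{e}^{-1}$.

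The paper avoids this loss by never comparing $\hat{\rho}_A$ directly to $\hat{\omega}_A(\nu)$. It first replaces $\hat{\rho}_A$ by the Gaussian state $\hat{\gamma}_A$ with the same first and second moments, for which \autoref{lem:GS} gives the exact identity $S(\hat{\gamma}_A)=S(\hat{\rho}_A)+S(\hat{\rho}_A\|\hat{\gamma}_A)$ (no constant is lost because the reference state is matched to $\hat{\rho}_A$); combining this with its image under $\Phi^c$ and data processing yields $S(\Phi^c(\hat{\rho}_A))-S(\hat{\rho}_A)\ge S(\Phi^c(\hat{\gamma}_A))-S(\hat{\gamma}_A)$ with no error term. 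It then passes from $\hat{\gamma}_A$ to $\hat{\omega}_A(\|\alpha\|_\infty)$, and shows that $\nu\mapsto S(\Phi^c(\hat{\omega}_A(\nu)))-S(\hat{\omega}_A(\nu))$ is decreasing, by writing the more mixed state as a Gaussian average of displacements of the less mixed one and using data processing for the mutual information $I(A:X)$ (\autoref{lem:diff}); monotonicity in $\nu$ removes any need to interchange limits. If you want to keep your relative-entropy strategy, you must at least replace the reference state $\hat{\omega}_A(\nu)$ by $\hat{\gamma}_A$ in the monotonicity step; the remaining passage from $\hat{\gamma}_A$ to the thermal limit still requires an argument of the type of \autoref{lem:diff}, and your concern about the degeneracy of $X^c(X^c)^{\top}$ then disappears along with the limit interchange.
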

\begin{proof}
Since the quantum conditional entropy is concave, we can restrict to $\hat{\rho}_{AM}$ pure.
Let $K:\mathbb{R}^{2m}\to\mathbb{R}^{2n}$ be the matrix such that for any $\mathbf{x}\in\mathbb{R}^{2n}$
\begin{equation}
(\Phi\otimes\mathbb{I}_M)\left(\hat{D}_A(\mathbf{x})\,\hat{\rho}_{AM}\,{\hat{D}_A(\mathbf{x})}^\dag\right) = \hat{D}_B(K\mathbf{x})\,(\Phi\otimes\mathbb{I}_M)(\hat{\rho}_{AM})\,{\hat{D}_B(K\mathbf{x})}^\dag\;.
\end{equation}
Let $\hat{\rho}_A$ be the marginal of $\hat{\rho}_{AM}$ on $A$.
Since the quantum conditional mutual information is invariant under local unitaries, we can assume that $\hat{\rho}_A$ has zero first moments.
We have
\begin{equation}
S(B|M)_{(\Phi\otimes\mathbb{I}_M)(\hat{\rho}_{AM})} = S\left((\Phi\otimes\mathbb{I}_M)(\hat{\rho}_{AM})\right) - S(\hat{\rho}_M) = S\left(\tilde{\Phi}(\hat{\rho}_A)\right) - S(\hat{\rho}_A)\;,
\end{equation}
where $\hat{\rho}_M$ is the marginal state of $\hat{\rho}_{AM}$, and $\tilde{\Phi}$ is the complementary channel of $\Phi$.
Let $\hat{\gamma}_A$ be the quantum Gaussian state with the same first and second moments as $\hat{\rho}_A$.
Since $\tilde{\Phi}$ is a Gaussian channel, $\tilde{\Phi}(\hat{\gamma}_A)$ is the quantum Gaussian state with the same first and second moments as $\tilde{\Phi}(\hat{\rho}_A)$.
We then have from \autoref{lem:GS}
\begin{align}
&S\left(\tilde{\Phi}(\hat{\rho}_A)\right) - S(\hat{\rho}_A)\nonumber\\
&= S\left(\tilde{\Phi}(\hat{\gamma}_A)\right) - S(\hat{\gamma}_A) + S(\hat{\rho}_A\|\hat{\gamma}_A) - S\left(\left.\tilde{\Phi}(\hat{\rho}_A)\right\|\tilde{\Phi}(\hat{\gamma}_A)\right)\nonumber\\
&\ge S\left(\tilde{\Phi}(\hat{\gamma}_A)\right) - S(\hat{\gamma}_A)\;,
\end{align}
where we have used the data-processing inequality for the quantum relative entropy.
Let $\alpha$ be the covariance matrix of $\hat{\gamma}_A$.
We then have
\begin{equation}
\hat{\omega}_A\left(\|\alpha\|_\infty\right) = \int_{\mathbb{R}^{2m}} \hat{D}_A(\mathbf{x})\,\hat{\gamma}_A\,{\hat{D}_A(\mathbf{x})}^\dag\,\mathrm{d}p_X(\mathbf{x})\;,
\end{equation}
where $p_X$ is the probability distribution of the classical Gaussian random variable $X$ with values in $\mathbb{R}^{2m}$, zero mean and covariance matrix $\|\alpha\|_\infty I_{2m} - \alpha$.
We then have from \autoref{lem:diff}
\begin{equation}
S\left(\tilde{\Phi}(\hat{\gamma}_A)\right) - S(\hat{\gamma}_A) \ge S\left(\tilde{\Phi}\left(\hat{\omega}_A\left(\|\alpha\|_\infty\right)\right)\right) - S\left(\hat{\omega}_A\left(\|\alpha\|_\infty\right)\right)\;.
\end{equation}
\autoref{lem:diff} also implies that the function
\begin{equation}
\nu\mapsto S\left(\tilde{\Phi}\left(\hat{\omega}_A(\nu)\right)\right) - S\left(\hat{\omega}_A(\nu)\right)\;,\qquad \nu\ge\frac{1}{2}
\end{equation}
is decreasing, hence
\begin{align}
&S\left(\tilde{\Phi}\left(\hat{\omega}_A\left(\|\alpha\|_\infty\right)\right)\right) - S\left(\hat{\omega}_A\left(\|\alpha\|_\infty\right)\right) \ge \lim_{\nu\to\infty} \left(S\left(\tilde{\Phi}\left(\hat{\omega}_A(\nu)\right)\right) - S\left(\hat{\omega}_A(\nu)\right)\right)\nonumber\\
&=\lim_{\nu\to\infty} \left(S\left((\Phi\otimes\mathbb{I}_{A'})(\hat{\omega}_{AA'}(\nu))\right) - S\left(\hat{\omega}_{A'}(\nu)\right)\right) = \lim_{\nu\to\infty} S(B|A')_{(\Phi\otimes\mathbb{I}_{A'})(\hat{\omega}_{AA'}(\nu))}\;.
\end{align}
\end{proof}

\begin{lemma}\label{lem:liminf}
For any $t>0$,
\begin{equation}
\lim_{\nu\to\infty}S(A|A')_{(\mathcal{N}(t)\otimes\mathbb{I}_{A'})(\hat{\omega}_{AA'}(\nu))} = n\ln t + n\;.
\end{equation}
\end{lemma}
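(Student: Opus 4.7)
My plan is to compute the quantity $S(A|A')_{(\mathcal{N}(t)\otimes\mathbb{I}_{A'})(\hat{\omega}_{AA'}(\nu))}$ explicitly using the fact that everything involved is a Gaussian state. Writing the conditional entropy as $S(A A') - S(A')$ and noting that $S(A')$ is unchanged by the heat semigroup acting on $A$, this amounts to computing the symplectic eigenvalues of the output covariance matrix and taking the limit $\nu \to \infty$.

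First I would use the tensor product structure to reduce to the case $n=1$. The standard purification $\hat{\omega}_{AA'}(\nu)$ can be taken as the tensor product of $n$ two-mode squeezed vacua, one per mode pair, and both $\mathcal{N}(t)$ and the Hamiltonian split as tensor products over modes, so $S(A|A')$ is additive. For a single pair of modes, the covariance matrix of $\hat{\omega}_{AA'}(\nu)$ can be taken in the standard form
\begin{equation}
\sigma(\nu) = \begin{pmatrix} \nu\, I_2 & c\,\sigma_z \\ c\,\sigma_z & \nu\, I_2 \end{pmatrix}, \qquad c = \sqrt{\nu^2 - 1/4}, \qquad \sigma_z = \mathrm{diag}(1,-1),
\end{equation}
since this gives a pure Gaussian state with marginal $\hat{\omega}_A(\nu)$. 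Applying $\mathcal{N}(t)\otimes\mathbb{I}_{A'}$ adds $t\, I_2$ to the upper-left block, producing a new covariance matrix $\sigma'(\nu,t)$ of the same block form with $A = (\nu+t) I_2$, $B = \nu I_2$, $C = c\,\sigma_z$.

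Next I would compute the two symplectic eigenvalues $\nu_\pm$ of $\sigma'(\nu,t)$ using the two-mode formula $\nu_\pm^2 = \bigl(\Sigma \pm \sqrt{\Sigma^2 - 4\det \sigma'}\bigr)/2$ with $\Sigma = \det A + \det B + 2\det C$. A short calculation gives $\Sigma = 2\nu t + t^2 + 1/2$ and $\det\sigma' = (\nu t + 1/4)^2$, hence $\Sigma^2 - 4\det\sigma' = t^2(4\nu t + t^2 + 1)$, and therefore
\begin{equation}
\nu_\pm(\nu,t) = \frac{\sqrt{4\nu t + t^2 + 1} \pm t}{2}.
\end{equation}
In particular $\nu_+ \nu_- = \nu t + 1/4$, which is the identity I actually need.

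Finally, using the asymptotic expansion $g(x-1/2) = \ln x + 1 + o(1)$ as $x\to\infty$, I obtain
\begin{equation}
S\bigl((\mathcal{N}(t)\otimes\mathbb{I}_{A'})(\hat{\omega}_{AA'}(\nu))\bigr) = g(\nu_+ - 1/2) + g(\nu_- - 1/2) = \ln(\nu_+ \nu_-) + 2 + o(1) = \ln(\nu t) + 2 + o(1),
\end{equation}
while $S(\hat{\omega}_{A'}(\nu)) = g(\nu - 1/2) = \ln\nu + 1 + o(1)$, so the difference tends to $\ln t + 1$. Multiplying by $n$ via the tensor product reduction yields the claim. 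I do not anticipate a serious obstacle here: once the symplectic eigenvalues are identified, the asymptotics of $g$ and the cancellation of the $\ln\nu$ term are routine. The only minor care needed is justifying the tensor product reduction, which follows from the modewise factorization of $\hat{\omega}_{AA'}(\nu)$ and of $\mathcal{N}(t) = \bigotimes_{k=1}^{n}\mathcal{N}_k(t)$.
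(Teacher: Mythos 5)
Your proposal is correct and follows essentially the same route as the paper: reduce to a single two-mode squeezed vacuum, compute the symplectic eigenvalues of the covariance matrix after $\mathcal{N}(t)\otimes\mathbb{I}_{A'}$, and use the asymptotics $g(x-\tfrac{1}{2})=\ln x+1+o(1)$. Your closed form $\nu_\pm=\bigl(\sqrt{4\nu t+t^2+1}\pm t\bigr)/2$ agrees with the paper's expression, and exploiting the exact product $\nu_+\nu_-=\nu t+\tfrac{1}{4}$ is a slightly cleaner way to reach the same cancellation of $\ln\nu$ that the paper obtains from $\nu_\pm=\sqrt{\nu t}+\mathcal{O}(1)$.
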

\begin{proof}
For any $\nu\ge\frac{1}{2}$, the quantum Gaussian state $\hat{\omega}_{AA'}(\nu)$ is the tensor product of $n$ identical two-mode squeezed quantum Gaussian states, each with covariance matrix
\begin{equation}
\alpha(\nu) := \left(
           \begin{array}{cc|cc}
             \nu & 0 & \sqrt{\nu^2-\frac{1}{4}} & 0 \\
             0 & \nu & 0 & -\sqrt{\nu^2-\frac{1}{4}} \\
             \hline
             \sqrt{\nu^2-\frac{1}{4}} & 0 & \nu & 0 \\
             0 & -\sqrt{\nu^2-\frac{1}{4}} & 0 & \nu \\
           \end{array}
         \right)\;,
\end{equation}
where the block decomposition refers to the $AA'$ bipartition.
For any $t\ge0$, the quantum Gaussian state $(\mathcal{N}(t)\otimes\mathbb{I}_{A'})(\hat{\omega}_{AA'}(\nu))$ is the tensor product of $n$ identical two-mode quantum Gaussian states, each with covariance matrix
\begin{equation}
\alpha(\nu,t) := \left(
           \begin{array}{cc|cc}
             \nu+t & 0 & \sqrt{\nu^2-\frac{1}{4}} & 0 \\
             0 & \nu+t & 0 & -\sqrt{\nu^2-\frac{1}{4}} \\
             \hline
             \sqrt{\nu^2-\frac{1}{4}} & 0 & \nu & 0 \\
             0 & -\sqrt{\nu^2-\frac{1}{4}} & 0 & \nu \\
           \end{array}
         \right)\;.
\end{equation}
The symplectic eigenvalues of $\alpha(\nu,t)$ are
\begin{equation}
\nu_\pm(\nu,t) = \frac{1}{2}\sqrt{4\nu\,t\pm 2t\sqrt{4\nu\,t + t^2 + 1} + 2t^2 + 1} = \sqrt{\nu\,t} + \mathcal{O}(1)
\end{equation}
for $\nu\to\infty$, hence
\begin{align}
&\lim_{\nu\to\infty}S(A|A')_{(\mathcal{N}(t)\otimes\mathbb{I}_{A'})(\hat{\omega}_{AA'}(\nu))}\nonumber\\
&= \lim_{\nu\to\infty}n\left(g\left(\nu_+(\nu,t)-\frac{1}{2}\right) + g\left(\nu_-(\nu,t)-\frac{1}{2}\right) - g\left(\nu-\frac{1}{2}\right)\right)\nonumber\\
&= n\ln t + n\;,
\end{align}
where we used that for $\nu\to\infty$
\begin{equation}
g\left(\nu-\frac{1}{2}\right) = \ln\nu + 1 + \mathcal{O}\left(\frac{1}{\nu^2}\right)\;.
\end{equation}
\end{proof}

\begin{theorem}[universal asymptotic scaling of quantum conditional entropy]\label{thm:scaling}
Let $A$ be a Gaussian quantum system with $n$ modes, and $M$ a quantum system.
Let $\hat{\rho}_{AM}$ be a quantum state on $AM$ such that its marginal on $A$ has finite first and second moments, and its marginal on $M$ has finite entropy.
Then,
\begin{equation}\label{eq:scaling}
\lim_{t\to\infty}\left(S(A|M)_{(\mathcal{N}(t)\otimes\mathbb{I}_M)(\hat{\rho}_{AM})} - n\ln t - n\right) = 0\;.
\end{equation}
\end{theorem}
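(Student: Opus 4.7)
The strategy is to sandwich $S(A|M)_{(\mathcal{N}(t)\otimes\mathbb{I}_M)(\hat\rho_{AM})} - n\ln t - n$ between a pointwise nonnegative lower bound and an upper bound that tends to zero as $t\to\infty$. \autoref{thm:liminf} supplies the lower bound essentially for free, while a standard Gaussian maximum-entropy argument handles the upper bound.

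For the lower bound, I apply \autoref{thm:liminf} with $\Phi = \mathcal{N}(t)$ (taking $B = A$ as $n$-mode Gaussian systems). For every fixed $t > 0$ this yields
\[
S(A|M)_{(\mathcal{N}(t)\otimes\mathbb{I}_M)(\hat\rho_{AM})} \;\geq\; \lim_{\nu\to\infty} S(A|A')_{(\mathcal{N}(t)\otimes\mathbb{I}_{A'})(\hat\omega_{AA'}(\nu))}\;,
\]
and \autoref{lem:liminf} identifies the right-hand side as $n\ln t + n$. Consequently the quantity inside the limit in \eqref{eq:scaling} is nonnegative for every $t > 0$, and its $\liminf$ as $t \to \infty$ is $\geq 0$.

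For the upper bound, I first use subadditivity $S(A|M) \leq S(A)$ to reduce to the unconditioned entropy of $\mathcal{N}(t)(\hat\rho_A)$, whose covariance matrix is $\sigma + t\,I_{2n}$ with $\sigma$ the (finite) covariance of $\hat\rho_A$. The maximum-entropy principle for fixed first and second moments then gives $S(\mathcal{N}(t)(\hat\rho_A)) \leq S(\hat\gamma_A(t))$, where $\hat\gamma_A(t)$ is the Gaussian state with those moments. A short perturbative argument applied to $\Delta^{-1}(\sigma + t\,I_{2n}) = t\,\Delta^{-1} + \Delta^{-1}\sigma$, whose unperturbed part $t\,\Delta^{-1}$ has eigenvalues $\pm it$ each with multiplicity $n$ and bounded perturbation $\Delta^{-1}\sigma$, shows the symplectic eigenvalues satisfy $\nu_k(t) = t + O(1)$; inserting this into the asymptotic expansion $g(x) = \ln x + 1 + O(1/x)$ gives $S(\hat\gamma_A(t)) = n\ln t + n + o(1)$. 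Hence the $\limsup$ of the difference is $\leq 0$, and combining with the lower bound yields the claimed limit.

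The entire conceptual content of the theorem sits in \autoref{thm:liminf}, so I do not expect any serious obstacle in this argument. The only mildly delicate point is the perturbation estimate for the symplectic spectrum of $\sigma + t\,I_{2n}$, which is routine given the boundedness of $\Delta^{-1}\sigma$ and the fact that $t\,\Delta^{-1}$ pulls the moduli of the eigenvalues to $t + O(1)$.
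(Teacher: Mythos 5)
Your proposal is correct and follows essentially the same route as the paper: the lower bound is obtained exactly as you describe from \autoref{thm:liminf} together with \autoref{lem:liminf}, and the upper bound from subadditivity plus a Gaussian maximum-entropy comparison. The only immaterial difference is in the comparison state for the upper bound: the paper compares $\mathcal{N}(t)(\hat{\rho}_A)$ with the thermal state of the same average energy via \autoref{lem:Topt}, whose symplectic eigenvalues are explicitly $E+\frac{1}{2}+t$, thereby avoiding your perturbation estimate; your estimate $\nu_k(t)=t+\mathcal{O}(1)$ is nonetheless valid (for instance because $t\,\Delta^{-1}$ is normal, or because $t\,I_{2n}\le\sigma+t\,I_{2n}\le\left(t+\|\sigma\|_\infty\right)I_{2n}$), so both versions of the argument go through.
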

\begin{remark}
The scaling \eqref{eq:scaling} had been proven in \cite{koenig2015conditional}, Lemma 6.1 in the particular case where $\hat{\rho}_{AM}$ is a Gaussian state.
\end{remark}
\begin{proof}
From the subadditivity of the quantum entropy we have for any $t\ge0$
\begin{equation}
S(A|M)_{(\mathcal{N}(t)\otimes\mathbb{I}_M)(\hat{\rho}_{AM})} \le S\left(\mathcal{N}(t)(\hat{\rho}_A)\right)\;.
\end{equation}
Let
\begin{equation}
E:=\frac{1}{n}\mathrm{Tr}_A\left[\hat{H}_A\,\hat{\rho}_A\right]
\end{equation}
be the average energy per mode of $\hat{\rho}_A$, and let $\hat{\omega}_A$ be the thermal quantum Gaussian state with average energy per mode $E$ and covariance matrix $\left(E+\frac{1}{2}\right)I_{2n}$.
For any $t\ge0$, $\mathcal{N}(t)(\hat{\omega}_A)$ is the thermal quantum Gaussian state with the same average energy as $\mathcal{N}(t)(\hat{\rho}_A)$.
We then have from \autoref{lem:Topt}
\begin{equation}
S\left(\mathcal{N}(t)(\hat{\rho}_A)\right) \le S\left(\mathcal{N}(t)(\hat{\omega}_A)\right) = n\,g(E+t)\;,
\end{equation}
hence
\begin{align}
\limsup_{t\to\infty}\left(S(A|M)_{(\mathcal{N}(t)\otimes\mathbb{I}_M)(\hat{\rho}_{AM})} - n\ln t -n\right) &\le \limsup_{t\to\infty}n\left(g(E+t) - \ln t - 1\right)\nonumber\\
&= 0\;.
\end{align}
On the other hand, from \autoref{thm:liminf} and \autoref{lem:liminf} we have for any $t\ge0$
\begin{equation}
S(A|M)_{(\mathcal{N}(t)\otimes\mathbb{I}_M)(\hat{\rho}_{AM})} \ge \lim_{\nu\to\infty}S(A|A')_{(\mathcal{N}(t)\otimes\mathbb{I}_{A'})(\hat{\omega}_{AA'}(\nu))} = n\ln t + n\;.
\end{equation}
\end{proof}

\section{Quantum conditional Entropy Power Inequality}\label{sec:EPI}
In this Section we prove the conditional Entropy Power Inequality, the main result of this paper.

\begin{theorem}[quantum conditional Entropy Power Inequality]\label{thm:EPI}
Let $A$, $B$ and $C$ be Gaussian quantum systems with $n$ modes, $M$ a quantum system, and $\mathcal{B}_\eta:AB\to C$ the beam-splitter with transmissivity $0\le\eta\le1$ or the squeezer of parameter $\eta\ge1$.
Let $\hat{\rho}_{ABM}$ be a quantum state on $ABM$ such that
\begin{equation}
\mathrm{Tr}_{AB}\left[\left(\hat{H}_A+\hat{H}_B\right)\hat{\rho}_{AB}\right]<\infty\;,\qquad S(\hat{\rho}_M)<\infty\;,
\end{equation}
and let us suppose that $A$ and $B$ are conditionally independent given $M$:
\begin{equation}
I(A:B|M)_{\hat{\rho}_{ABM}} = 0\;.
\end{equation}
Then, for any $0\le\lambda\le1$ the quantum linear conditional Entropy Power Inequality holds:
\begin{align}\label{eq:EPIl}
\frac{S(C|M)_{\hat{\rho}_{CM}}}{n} &\ge \lambda\,\frac{S(A|M)_{\hat{\rho}_{AM}}}{n} + \left(1-\lambda\right)\frac{S(B|M)_{\hat{\rho}_{BM}}}{n}\nonumber\\
&\phantom{\ge} + \lambda\ln\frac{\eta}{\lambda} + (1-\lambda)\ln\frac{|1-\eta|}{1-\lambda}\;.
\end{align}
The quantum conditional Entropy Power Inequality follows maximizing over $\lambda$ the right-hand side of \eqref{eq:EPIl}:
\begin{equation}\label{eq:EPI}
\exp\frac{S(C|M)_{\hat{\rho}_{CM}}}{n} \ge \eta\,\exp\frac{S(A|M)_{\hat{\rho}_{AM}}}{n} + \left|1-\eta\right|\exp\frac{S(B|M)_{\hat{\rho}_{BM}}}{n}\;.
\end{equation}
\end{theorem}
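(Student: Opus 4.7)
The plan is to reduce \eqref{eq:EPI} to the linear form \eqref{eq:EPIl} by Legendre duality in $\lambda$: a short computation shows that maximizing the right-hand side of \eqref{eq:EPIl} at
\begin{equation*}
\lambda^\star = \frac{\eta\,\mathrm{e}^{S(A|M)/n}}{\eta\,\mathrm{e}^{S(A|M)/n} + |1-\eta|\,\mathrm{e}^{S(B|M)/n}}
\end{equation*}
collapses the affine bound to $\ln\bigl(\eta\,\mathrm{e}^{S(A|M)/n} + |1-\eta|\,\mathrm{e}^{S(B|M)/n}\bigr)$, which is exactly \eqref{eq:EPI}. I therefore focus on \eqref{eq:EPIl} and prove it by the quantum conditional analogue of the Stam/Blachman heat-flow argument: evolve the input state under the heat semigroup with carefully matched time-scalings on $A$ and $B$ so that the image on $C$ evolves at the canonical rate $t$, monitor a linear combination of conditional entropies, and combine the linear conditional Stam inequality \eqref{eq:Stamlin} with the universal asymptotic scaling of \autoref{thm:scaling}. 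By continuity in $(\lambda,\eta)$, it suffices to treat $\lambda\in(0,1)$ and $\eta\in(0,1)\cup(1,\infty)$.

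For such $\lambda$ and $\eta$, set $s_A(t):=\lambda t/\eta$ and $s_B(t):=(1-\lambda)t/|1-\eta|$, so that $\eta\,s_A(t)+|1-\eta|\,s_B(t)=t$. Let $\hat{\rho}_{ABM}(t):=(\mathcal{N}_A(s_A(t))\otimes\mathcal{N}_B(s_B(t))\otimes\mathbb{I}_M)(\hat{\rho}_{ABM})$ with marginals $\hat{\rho}_{AM}(t),\hat{\rho}_{BM}(t)$, and set $\hat{\rho}_{CM}(t):=(\mathcal{B}_\eta\otimes\mathbb{I}_M)(\hat{\rho}_{ABM}(t))$; by \autoref{lem:BN} and the matching of scalings, $\hat{\rho}_{CM}(t)=(\mathcal{N}_C(t)\otimes\mathbb{I}_M)(\hat{\rho}_{CM})$. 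Because the heat flow is local on $A$ and on $B$, the finite-energy hypothesis and the conditional independence $I(A:B|M)=0$ both persist, so \autoref{thm:Stam} is applicable at every $t\ge 0$. Introduce
\begin{equation*}
\phi(t) := \frac{1}{n}\Bigl[S(C|M)_{\hat{\rho}_{CM}(t)} - \lambda\,S(A|M)_{\hat{\rho}_{AM}(t)} - (1-\lambda)\,S(B|M)_{\hat{\rho}_{BM}(t)}\Bigr].
\end{equation*}
By \autoref{thm:conc} each of the three conditional entropies is a continuous, increasing, concave function of its heat-flow time parameter, hence locally Lipschitz on $(0,\infty)$ and absolutely continuous on $[0,T]$ for every $T>0$; so is $\phi$. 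At any point of differentiability, \autoref{prop:ddB} together with the chain rule gives
\begin{equation*}
n\,\phi'(t) = J(C|M)_{\hat{\rho}_{CM}(t)} - \frac{\lambda^2}{\eta}\,J(A|M)_{\hat{\rho}_{AM}(t)} - \frac{(1-\lambda)^2}{|1-\eta|}\,J(B|M)_{\hat{\rho}_{BM}(t)},
\end{equation*}
which is non-positive by the linear conditional Stam inequality \eqref{eq:Stamlin} applied to $\hat{\rho}_{ABM}(t)$ with the same $\lambda$. Integrating yields $\phi(0)\ge\phi(T)$ for every $T>0$.

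It remains to compute $\lim_{T\to\infty}\phi(T)$, which I obtain by invoking \autoref{thm:scaling} separately on the heat-flow evolutions of $\hat{\rho}_{CM}$, $\hat{\rho}_{AM}$, and $\hat{\rho}_{BM}$; the required finite first- and second-moment and finite-entropy conditions are preserved by the local heat flow. As $T\to\infty$,
\begin{equation*}
\tfrac{1}{n}S(C|M)_{\hat{\rho}_{CM}(T)} = \ln T + 1 + o(1),
\end{equation*}
and analogously $\tfrac{1}{n}S(A|M)_{\hat{\rho}_{AM}(T)}=\ln s_A(T)+1+o(1)$, $\tfrac{1}{n}S(B|M)_{\hat{\rho}_{BM}(T)}=\ln s_B(T)+1+o(1)$. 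Since $\lambda+(1-\lambda)=1$, the $\ln T$ coefficients and the additive constants $+1$ cancel in $\phi(T)$, leaving
\begin{equation*}
\lim_{T\to\infty}\phi(T) = \lambda\ln\frac{\eta}{\lambda} + (1-\lambda)\ln\frac{|1-\eta|}{1-\lambda},
\end{equation*}
and rearranging $\phi(0)\ge\lim_{T\to\infty}\phi(T)$ is exactly \eqref{eq:EPIl}. The step I expect to be the main obstacle is making the monotonicity of $\phi$ fully rigorous: $\phi$ is a signed combination of concave functions and is not itself concave, so the non-increase must come from the almost-everywhere bound $\phi'\le 0$ combined with absolute continuity, for which the regularity established in \autoref{thm:conc} is essential. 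A secondary subtlety is the case where some initial Fisher information is infinite or the initial conditional entropy is $-\infty$; these are absorbed by the fact that the heat flow strictly regularizes the state and that the argument only needs continuity at $t=0$ via \autoref{lem:cont}.
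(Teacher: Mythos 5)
Your proposal is correct and follows essentially the same route as the paper's own proof: the same time-rescaled heat-flow states $\hat{\rho}_{ABM}(t)$ with $s_A=\lambda t/\eta$ and $s_B=(1-\lambda)t/|1-\eta|$, the same functional $\phi$, the same combination of \autoref{prop:ddB} with the linear Stam inequality \eqref{eq:Stamlin} to get $\phi'\le 0$ almost everywhere, the same appeal to the regularity from \autoref{thm:conc} to integrate, and the same use of \autoref{thm:scaling} to evaluate $\lim_{t\to\infty}\phi(t)$, followed by optimization over $\lambda$. The only differences are cosmetic (normalizing $\phi$ by $n$ and spelling out the absolute-continuity point and the edge cases $\lambda\in\{0,1\}$ slightly more explicitly than the paper does).
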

\begin{remark}
The quantum linear inequality \eqref{eq:EPIl} had been proven for $\lambda=\eta$ in the special case where $\hat{\rho}_{ABM}$ is a quantum Gaussian state in \cite{koenig2015conditional}, Theorem 8.1.
\end{remark}
\begin{proof}
Let us define for any $t\ge0$
\begin{align}
\hat{\rho}_{ABM}(t) &:= \left(\mathcal{N}_A\left(\frac{\lambda\, t}{\eta}\right)\otimes\mathcal{N}_B\left(\frac{(1-\lambda)\,t}{|1-\eta|}\right)\otimes\mathbb{I}_M\right)(\hat{\rho}_{ABM})\;,\nonumber\\
\hat{\rho}_{CM}(t) &:= (\mathcal{B}_\eta\otimes\mathbb{I}_M)(\hat{\rho}_{ABM}(t))\;.
\end{align}
We have for any $t\ge0$
\begin{subequations}
\begin{align}
\hat{\rho}_{AM}(t) &= \left(\mathcal{N}_A\left(\frac{\lambda\, t}{\eta}\right)\otimes\mathbb{I}_M\right)(\hat{\rho}_{AM})\;,\\
\hat{\rho}_{BM}(t) &= \left(\mathcal{N}_B\left(\frac{(1-\lambda)\,t}{|1-\eta|}\right)\otimes\mathbb{I}_M\right)(\hat{\rho}_{BM})\;,\\
\hat{\rho}_{CM}(t) &= \left(\mathcal{N}_C(t)\otimes\mathbb{I}_M\right)(\hat{\rho}_{CM})\;,
\end{align}
\end{subequations}
where we have set $\hat{\rho}_{CM}:=\hat{\rho}_{CM}(0)$.
The time evolution preserves the condition $I(A:B|M)=0$.
Indeed, we have from the data-processing inequality for the quantum mutual information
\begin{equation}
0\le I(A:B|M)_{\hat{\rho}_{ABM}(t)} \le I(A:B|M)_{\hat{\rho}_{ABM}} = 0\;.
\end{equation}
We define the function
\begin{equation}
\phi(t) := S(C|M)_{\hat{\rho}_{CM}(t)} - \lambda\,S(A|M)_{\hat{\rho}_{AM}(t)} - (1-\lambda)\,S(B|M)_{\hat{\rho}_{BM}(t)}\;.
\end{equation}
We have from \autoref{prop:ddB} and \autoref{thm:Stam}
\begin{equation}
\phi'(t) = J(C|M)_{\hat{\rho}_{CM}(t)} - \frac{\lambda^2}{\eta}J(A|M)_{\hat{\rho}_{AM}(t)} - \frac{(1-\lambda)^2}{|1-\eta|}J(B|M)_{\hat{\rho}_{BM}(t)} \le 0\;.
\end{equation}
From \autoref{thm:conc}, $\phi$ is a linear combination of continuous concave functions, hence it is almost everywhere differentiable and for any $t\ge0$
\begin{equation}
\phi(t) - \phi(0) = \int_0^t\phi'(s)\,\mathrm{d}s \le 0\;.
\end{equation}
We then have from \autoref{thm:scaling}
\begin{align}
\phi(0) &\ge \lim_{t\to\infty}\phi(t)\nonumber\\
&= \lim_{t\to\infty}\left(S(C|M)_{\hat{\rho}_{CM}(t)} - \lambda\,S(A|M)_{\hat{\rho}_{AM}(t)} - (1-\lambda)\,S(B|M)_{\hat{\rho}_{BM}(t)}\right)\nonumber\\
&=n\left(\lambda\ln\frac{\eta}{\lambda} + \left(1-\lambda\right)\ln\frac{|1-\eta|}{1-\lambda}\right)\;,
\end{align}
and the quantum linear conditional Entropy Power Inequality \eqref{eq:EPIl} follows.
The quantum conditional Entropy Power Inequality \eqref{eq:EPI} follows choosing
\begin{equation}
\lambda = \frac{\eta\,\exp\frac{S(A|M)_{\hat{\rho}_{AM}(t)}}{n}}{\eta\,\exp\frac{S(A|M)_{\hat{\rho}_{AM}(t)}}{n} + \left|1-\eta\right|\exp\frac{S(B|M)_{\hat{\rho}_{BM}(t)}}{n}}\;,
\end{equation}
that maximizes the right-hand side of \eqref{eq:EPIl}.
\end{proof}

\section{Optimality of the quantum conditional Entropy Power Inequality}\label{sec:sharp}
In this Section, we prove that the quantum conditional Entropy Power Inequality is asymptotically saturated by a suitable sequence of quantum Gaussian input states.
\begin{theorem}[optimality of the quantum conditional Entropy Power Inequality]\label{thm:sharp}
The quantum conditional Entropy Power Inequality \eqref{eq:EPI} is optimal.
In other words, for any $a,\,b\in\mathbb{R}$ there exists a sequence of quantum Gaussian input states $\left\{\hat{\gamma}_{ABA'B'}^{(n)}\right\}_{n\in\mathbb{N}}$ of the form
\begin{equation}
\hat{\gamma}_{ABA'B'}^{(n)}=\hat{\gamma}_{AA'}^{(n)}\otimes \hat{\gamma}_{BB'}^{(n)}\;,\qquad n\in\mathbb{N}\;,
\end{equation}
such that
\begin{subequations}
\begin{align}
\lim_{n\to\infty}\exp\left(S(A|A'B')_{\hat{\gamma}_{AA'B'}^{(n)}}-1\right) &= a\;,\\
\lim_{n\to\infty}\exp\left(S(B|A'B')_{\hat{\gamma}_{BA'B'}^{(n)}}-1\right) &= b
\end{align}
\end{subequations}
and
\begin{equation}
\lim_{n\to\infty}\exp\left(S(C|A'B')_{\hat{\gamma}_{CA'B'}^{(n)}}-1\right) = \eta\,a + \left|1-\eta\right|b\;,
\end{equation}
where
\begin{equation}
\hat{\gamma}_{CA'B'}^{(n)} := \left(\mathcal{B}_\eta\otimes\mathbb{I}_{A'B'}\right)\left(\hat{\gamma}_{AA'}^{(n)}\otimes \hat{\gamma}_{BB'}^{(n)}\right)\;,
\end{equation}
and $A$, $A'$, $B$, $B'$ and $C$ are one-mode Gaussian quantum systems.
\end{theorem}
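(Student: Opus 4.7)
The plan is to realise any target values $a, b > 0$ by the candidate sequence
\begin{equation*}
\hat{\gamma}_{AA'}^{(n)} := \bigl(\mathcal{N}_A(a) \otimes \mathbb{I}_{A'}\bigr)(\hat{\omega}_{AA'}(\nu_n))\;,\quad \hat{\gamma}_{BB'}^{(n)} := \bigl(\mathcal{N}_B(b) \otimes \mathbb{I}_{B'}\bigr)(\hat{\omega}_{BB'}(\nu_n))\;,
\end{equation*}
where $\nu_n \to \infty$ and $\hat{\omega}_{AA'}(\nu_n)$ is the one-mode purified thermal Gaussian state already employed in \autoref{thm:liminf} and \autoref{lem:liminf}. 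These are quantum Gaussian states, and the joint input $\hat{\gamma}_{AA'}^{(n)} \otimes \hat{\gamma}_{BB'}^{(n)}$ has the product form required by the statement, with memory $M := A'B'$. The marginal conditional entropies are immediate from the tensor-product structure and from \autoref{lem:liminf} applied with one mode: indeed $S(A|A'B')_{\hat\gamma^{(n)}} = S(A|A')_{\hat\gamma^{(n)}_{AA'}} \to \ln a + 1$, hence $\exp(S(A|A'B') - 1) \to a$, and analogously $\exp(S(B|A'B') - 1) \to b$.

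The central step is to analyse the output conditional entropy. By \autoref{lem:BN},
\begin{equation*}
\hat{\gamma}^{(n)}_{CA'B'} = \bigl(\mathcal{N}_C(t) \otimes \mathbb{I}_{A'B'}\bigr)\bigl(\hat{\mu}^{(n)}_{CA'B'}\bigr)\;,\quad t := \eta a + |1-\eta|b\;,
\end{equation*}
with $\hat{\mu}^{(n)}_{CA'B'} := (\mathcal{B}_\eta \otimes \mathbb{I}_{A'B'})\bigl(\hat{\omega}_{AA'}(\nu_n) \otimes \hat{\omega}_{BB'}(\nu_n)\bigr)$. In the beam-splitter case $0 \leq \eta \leq 1$, I would apply to $A'B'$ the ancillary beam-splitter unitary defined by $R_{A''} := \sqrt{\eta}R_{A'} + \sqrt{1-\eta}R_{B'}$ and $R_{B''} := -\sqrt{1-\eta}R_{A'} + \sqrt{\eta}R_{B'}$; a direct block computation of the covariance matrix then shows that in the new basis $CA''B''$ the state factorises as $(\mathcal{N}_C(t) \otimes \mathbb{I}_{A''})(\hat{\omega}_{CA''}(\nu_n)) \otimes \hat{\omega}_{B''}(\nu_n)$. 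Hence $S(C|A'B') = S(C|A'')_{(\mathcal{N}_C(t) \otimes \mathbb{I}_{A''})(\hat{\omega}_{CA''}(\nu_n))}$, and \autoref{lem:liminf} delivers the limit $\ln t + 1$, that is $\exp(S(C|A'B') - 1) \to \eta a + (1-\eta)b$, as required.

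The squeezing case $\eta \geq 1$ is the main technical obstacle. Because of the time-reversal matrix $T$ from \eqref{eq:defT} that enters the squeezing version of \autoref{lem:BD}, no passive symplectic transformation of $A'B'$ achieves a comparable factorisation, and I would instead analyse the symplectic spectrum of $\sigma_{CA'B'}(t)$ directly. Decomposing the covariance into its $Q$-$Q$ and $P$-$P$ blocks, the $3 \times 3$ matrix $\sigma^{PP}\sigma^{QQ}$ has trace, sum of principal $2 \times 2$ minors, and determinant that admit explicit expansions in $\nu_n$; matching these with the elementary symmetric polynomials of the squared symplectic eigenvalues yields, as $\nu_n \to \infty$, one symplectic eigenvalue growing as $(2\eta-1)\nu_n$ and the other two behaving as $\sqrt{t\,\nu_n/(2\eta-1)}$. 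Inserting these asymptotics into $S(C|A'B') = \sum_i g(\mu_i - 1/2) - 2 g(\nu_n - 1/2)$ and using $g(\nu - 1/2) = \ln \nu + 1 + o(1)$ from the proof of \autoref{lem:liminf}, the $\pm\ln(2\eta-1)$ contributions cancel and one again obtains $S(C|A'B') \to \ln t + 1$. Note that the matching lower bound $S(C|A'B') \geq \ln t + 1$ holds unconditionally by \autoref{thm:liminf} combined with \autoref{lem:liminf} applied to the Gaussian channel $\mathcal{N}_C(t)$, so only the asymptotic upper bound in the squeezing case genuinely requires the spectral computation above, where the bulk of the technical work lies.
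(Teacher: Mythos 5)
Your proposal is correct, and it takes a genuinely different route from the paper's. The paper chooses $\hat{\gamma}_{AA'}^{(n)}$ to be the mixed Gaussian state whose covariance matrix is $n$ times that of a two-mode squeezed state (symplectic eigenvalues $(n,n)$, thermal marginals with parameter $n^2/a$), and then computes all three conditional entropies by one uniform device: since every symplectic eigenvalue of the output covariance matrix exceeds $n$, $S(CA'B')=\tfrac12\ln\det\sigma^{(n)}_{CA'B'}+3+o(1)$, and the determinant is evaluated explicitly, with the beam-splitter and squeezing cases differing only in the sign pattern of the off-diagonal blocks. Your states $(\mathcal{N}_A(a)\otimes\mathbb{I}_{A'})(\hat{\omega}_{AA'}(\nu_n))$ are a different family (noise added to a purification rather than a rescaled purification), and your treatment is asymmetric. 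In the beam-splitter case the passive rotation of $A'B'$ does what you claim — the $C$--$B''$ cross-covariance $-\sqrt{(1-\eta)\eta}\,cT+\sqrt{\eta(1-\eta)}\,cT$ vanishes, the $C$--$A''$ block becomes exactly that of $\hat{\omega}_{CA''}(\nu_n)$, and a Gaussian state with block-diagonal covariance factorizes — so the output entropy reduces entirely to \autoref{lem:liminf}; this is slicker than the paper's computation. In the squeezing case you fall back on a symplectic-spectrum computation of essentially the same weight as the paper's; your claimed asymptotics are correct (with $s=(2\eta-1)\nu+t$ and $c^2=\nu^2-\tfrac14$ the elementary symmetric polynomials of the $\mu_i^2$ come out as $e_1=s^2+\tfrac12$, $e_2=2(2\eta-1)t\nu^3+\mathcal{O}(\nu^2)$, $e_3=\nu^2\left(t\nu+\tfrac{2\eta-1}{4}\right)^2$, giving $\mu_1\sim(2\eta-1)\nu$ and $\mu_{2,3}\sim\sqrt{t\nu/(2\eta-1)}$), though you should note that since \autoref{thm:liminf} already supplies the matching lower bound, the only thing the upper bound actually requires is that all three symplectic eigenvalues diverge together with the determinant asymptotics $\det\sigma=t^2\nu^4(1+o(1))$ — the individual rates are not needed, which shortens the computation to exactly the paper's argument. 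Finally, make explicit that $a,b>0$ is assumed throughout (both constructions use $\ln a$ and $\ln b$); this restriction is implicit in the exponential form of the statement and is shared by the paper's proof.
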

\begin{proof}
Let $\hat{\gamma}_{AA'}^{(n)}$ and $\hat{\gamma}_{BB'}^{(n)}$ be the quantum Gaussian states with covariance matrices
\begin{subequations}
\begin{align}
\sigma_{AA'}^{(n)} &= n\left(
                            \begin{array}{cc|cc}
                              \frac{n}{a} & 0 & \sqrt{\frac{n^2}{a^2}-1} & 0 \\
                              0 & \frac{n}{a} & 0 & -\sqrt{\frac{n^2}{a^2}-1} \\\hline
                              \sqrt{\frac{n^2}{a^2}-1} & 0 & \frac{n}{a} & 0 \\
                              0 & -\sqrt{\frac{n^2}{a^2}-1} & 0 & \frac{n}{a} \\
                            \end{array}
                          \right)\;,\\
\sigma_{BB'}^{(n)} &= n\left(
                            \begin{array}{cc|cc}
                              \frac{n}{b} & 0 & \sqrt{\frac{n^2}{b^2}-1} & 0 \\
                              0 & \frac{n}{b} & 0 & -\sqrt{\frac{n^2}{b^2}-1} \\\hline
                              \sqrt{\frac{n^2}{b^2}-1} & 0 & \frac{n}{b} & 0 \\
                              0 & -\sqrt{\frac{n^2}{b^2}-1} & 0 & \frac{n}{b} \\
                            \end{array}
                          \right)\;,
\end{align}
\end{subequations}
where the block decompositions refer to the bipartitions $AA'$ and $BB'$, respectively.
The symplectic eigenvalues of $\sigma_{AA'}^{(n)}$ are $(n,\,n)$, hence
\begin{subequations}
\begin{align}
S(AA')_{\hat{\gamma}_{AA'}^{(n)}} &= 2\,g\left(n-\frac{1}{2}\right) = \ln n^2 + 2 + \mathcal{O}\left(\frac{1}{n^2}\right)\;,\\
S(A')_{\hat{\gamma}_{AA'}^{(n)}} &= g\left(\frac{n^2}{a}-\frac{1}{2}\right) = \ln\frac{n^2}{a} + 1 + \mathcal{O}\left(\frac{1}{n^4}\right)\;,
\end{align}
\end{subequations}
and
\begin{equation}
\lim_{n\to\infty}S(A|A'B')_{\hat{\gamma}_{AA'B'}^{(n)}} = \lim_{n\to\infty}S(A|A')_{\hat{\gamma}_{AA'}^{(n)}} = 1 + \ln a\;.
\end{equation}
Analogously,
\begin{equation}
\lim_{n\to\infty}S(B|A'B')_{\hat{\gamma}_{BA'B'}^{(n)}} = 1 + \ln b\;.
\end{equation}
For $0\le\eta\le1$, the covariance matrix of $\hat{\gamma}^{(n)}_{CA'B'}$ is
\begin{equation}
\sigma^{(n)}_{CA'B'} = n\left(
                         \begin{array}{c|c|c}
                           n\left(\frac{\eta}{a}+\frac{1-\eta}{b}\right)I & \sqrt{\eta\left(\frac{n^2}{a^2}-1\right)}T & \sqrt{(1-\eta)\left(\frac{n^2}{b^2}-1\right)}T \\\hline
                           \sqrt{\eta\left(\frac{n^2}{a^2}-1\right)}T & \frac{n}{a}I & 0 \\\hline
                           \sqrt{(1-\eta)\left(\frac{n^2}{b^2}-1\right)}T & 0 & \frac{n}{b}I \\
                         \end{array}
                       \right),
\end{equation}
where
\begin{equation}
I = \left(
      \begin{array}{cc}
        1 & 0 \\
        0 & 1 \\
      \end{array}
    \right)\;,\qquad
T = \left(
      \begin{array}{cc}
        1 & 0 \\
        0 & -1 \\
      \end{array}
    \right)\;,
\end{equation}
and the block decomposition refers to the tripartition $CA'B'$.
We have on one hand
\begin{equation}
S(A'B')_{\hat{\gamma}^{(n)}_{CA'B'}} = g\left(\frac{n^2}{a}-\frac{1}{2}\right) + g\left(\frac{n^2}{b}-\frac{1}{2}\right) = \ln\frac{n^4}{a\,b} + 2 + \mathcal{O}\left(\frac{1}{n^4}\right)\;.
\end{equation}
Since $\frac{\sigma^{(n)}_{CA'B'}}{2n}$ is still the covariance matrix of a positive quantum Gaussian state, its symplectic eigenvalues are all larger than $\frac{1}{2}$, and the symplectic eigenvalues $\left(\nu_1^{(n)},\,\nu_2^{(n)},\,\nu_3^{(n)}\right)$ of $\sigma^{(n)}_{CA'B'}$ are all larger than $n$.
We then have
\begin{align}
S(CA'B')_{\hat{\gamma}^{(n)}_{CA'B'}} &= g\left(\nu_1^{(n)}-\frac{1}{2}\right) + g\left(\nu_2^{(n)}-\frac{1}{2}\right) + g\left(\nu_3^{(n)}-\frac{1}{2}\right)\nonumber\\
&= \ln\left(\nu_1^{(n)}\nu_2^{(n)}\nu_3^{(n)}\right) + 3 + \mathcal{O}\left(\frac{1}{n^2}\right)\nonumber\\
&= \frac{1}{2}\ln\det\sigma^{(n)}_{CA'B'} + 3 + \mathcal{O}\left(\frac{1}{n^2}\right)\nonumber\\
&= \ln\frac{n^4\left(\eta\,a+\left(1-\eta\right)b\right)}{a\,b} + 3 + \mathcal{O}\left(\frac{1}{n^2}\right)\;,
\end{align}
and
\begin{equation}
\lim_{n\to\infty}S(C|A'B')_{\hat{\gamma}^{(n)}_{CA'B'}} = \ln\left(\eta\,a+\left(1-\eta\right)b\right) + 1\;.
\end{equation}
Similarly, for $\eta\ge1$ the covariance matrix of $\hat{\gamma}^{(n)}_{CA'B'}$ is
\begin{equation}
\sigma^{(n)}_{CA'B'} = n\left(
                         \begin{array}{c|c|c}
                           n\left(\frac{\eta}{a}+\frac{1-\eta}{b}\right)I & \sqrt{\eta\left(\frac{n^2}{a^2}-1\right)}T & \sqrt{(\eta-1)\left(\frac{n^2}{b^2}-1\right)}I \\\hline
                           \sqrt{\eta\left(\frac{n^2}{a^2}-1\right)}T & \frac{n}{a}I & 0 \\\hline
                           \sqrt{(\eta-1)\left(\frac{n^2}{b^2}-1\right)}I & 0 & \frac{n}{b}I \\
                         \end{array}
                       \right).
\end{equation}
If $\left(\nu_1^{(n)},\,\nu_2^{(n)},\,\nu_3^{(n)}\right)$ are its symplectic eigenvalues,
\begin{align}
S(CA'B')_{\hat{\gamma}^{(n)}_{CA'B'}} &= g\left(\nu_1^{(n)}-\frac{1}{2}\right) + g\left(\nu_2^{(n)}-\frac{1}{2}\right) + g\left(\nu_3^{(n)}-\frac{1}{2}\right)\nonumber\\
&= \ln\left(\nu_1^{(n)}\nu_2^{(n)}\nu_3^{(n)}\right) + 3 + \mathcal{O}\left(\frac{1}{n^2}\right)\nonumber\\
&= \frac{1}{2}\ln\det\sigma^{(n)}_{CA'B'} + 3 + \mathcal{O}\left(\frac{1}{n^2}\right)\nonumber\\
&= \ln\frac{n^4\left(\eta\,a+\left(\eta-1\right)b\right)}{a\,b} + 3 + \mathcal{O}\left(\frac{1}{n^2}\right)\;,
\end{align}
and
\begin{equation}
\lim_{n\to\infty}S(C|A'B')_{\hat{\gamma}^{(n)}_{CA'B'}} = \ln\left(\eta\,a+\left(\eta-1\right)b\right) + 1\;.
\end{equation}
\end{proof}

\section{Entanglement-assisted classical capacity}\label{sec:capacity}
In this section, we exploit the quantum conditional Entropy Power Inequality to prove an upper bound to the entanglement-assisted classical capacity of the following non-Gaussian quantum channel.
This implication has been first considered in \cite{koenig2015conditional}, section III.

Let us fix a quantum state $\hat{\sigma}_B$ on the $n$-mode Gaussian quantum system $B$.
We consider the channel $\Phi:A\to C$ that mixes with $\hat{\sigma}_B$ the input state $\hat{\rho}_A$ on the $n$-mode Gaussian quantum system $A$ through a beam-splitter or a squeezing operation:
\begin{equation}
\Phi(\hat{\rho}_A) = \mathcal{B}_\eta(\hat{\rho}_A\otimes\hat{\sigma}_B)\;,\qquad \eta\ge0\;.
\end{equation}

If the sender can use an unlimited amount of energy, the entanglement-assisted classical capacity is infinite.
Since this scenario is not physical, we assume that the sender can use at most an energy $E$ per each mode.
The entanglement-assisted classical capacity \cite{holevo2013quantum,wilde2017quantum} of $\Phi$ is then equal to the supremum of the quantum mutual information:
\begin{equation}
C_{ea}(\Phi) = \sup\left\{I(C:M)_{(\Phi\otimes\mathbb{I}_M)(\hat{\rho}_{AM})}:\hat{\rho}_{AM}\text{ pure},\;\mathrm{Tr}_A\left[\hat{H}_A\,\hat{\rho}_A\right]\le n\,E\right\}\;.
\end{equation}
Let
\begin{equation}
E_0 := \frac{1}{n}\mathrm{Tr}_B\left[\hat{H}_B\,\hat{\sigma}_B\right]\;,\qquad S_0 := \frac{S(\hat{\sigma}_B)}{n}
\end{equation}
be the average energy and the entropy per mode of $\hat{\sigma}_B$, respectively.
The average energy per mode of $\Phi(\hat{\rho}_A)$ is
\begin{align}
\frac{1}{n}\mathrm{Tr}_C\left[\hat{H}_C\,\Phi(\hat{\rho}_A)\right] &= \frac{\eta}{n}\mathrm{Tr}_A\left[\hat{H}_A\,\hat{\rho}_A\right] + |1-\eta|\,E_0 + \frac{\eta+|1-\eta|-1}{2}\nonumber\\
&\le \eta\,E + |1-\eta|\,E_0 + \frac{\eta+|1-\eta|-1}{2}\;.
\end{align}
From \autoref{lem:Topt},
\begin{equation}
S(\Phi(\hat{\rho}_A)) \le n\;g\left(\eta\,E + |1-\eta|\,E_0 + \frac{\eta+|1-\eta|-1}{2}\right)\;.
\end{equation}
From the quantum conditional Entropy Power Inequality we have (we recall that $M$ is correlated only with $A$ and that $\hat{\rho}_{AM}$ is pure)
\begin{align}
\exp\frac{S(C|M)_{(\Phi\otimes\mathbb{I}_M)(\hat{\rho}_{AM})}}{n} &\ge \eta\exp\frac{S(A|M)_{\hat{\rho}_{AM}}}{n} + \left|1-\eta\right| \exp S_0\nonumber\\
&= \eta\exp\frac{-S(\hat{\rho}_A)}{n} + \left|1-\eta\right| \exp S_0\nonumber\\
&\ge \eta\exp(-g(E))+ \left|1-\eta\right| \exp S_0\;,
\end{align}
where in the last step we have used \autoref{lem:Topt} again.
Finally,
\begin{align}
I(C:M)_{(\Phi\otimes\mathbb{I}_M)(\hat{\rho}_{AM})} &= S(\Phi(\hat{\rho}_A)) - S(C|M)_{(\Phi\otimes\mathbb{I}_M)(\hat{\rho}_{AM})}\nonumber\\
&\le n\;g\left(\eta\,E + |1-\eta|\,E_0 + \frac{\eta+|1-\eta|-1}{2}\right)\nonumber\\
&\phantom{\le} - n\ln\left(\eta\,\mathrm{e}^{-g(E)}+ \left|1-\eta\right| \mathrm{e}^{S_0}\right)\;,
\end{align}
so that
\begin{align}
C_{ea}(\Phi) &\le n\;g\left(\eta\,E + |1-\eta|\,E_0 + \frac{\eta+|1-\eta|-1}{2}\right)\nonumber\\
&\phantom{\le} - n\ln\left(\eta\,\mathrm{e}^{-g(E)}+ \left|1-\eta\right| \mathrm{e}^{S_0}\right)\;.
\end{align}

\section{Conclusions}\label{sec:concl}
We have proven the conditional Entropy Power Inequality for Gaussian quantum systems, which are the most promising platform for quantum communication and quantum key distribution.
This fundamental inequality determines the minimum quantum conditional entropy of the output of the beam-splitter or of the squeezing among all the quantum input states where the two inputs are conditionally independent given the memory and have given quantum conditional entropies.
This inequality is optimal, since it is asymptotically saturated by a suitable sequence of quantum Gaussian input states.
In the unconditioned case, the optimal inequality is still an open challenging conjecture \cite{guha2008entropy} that is turning out to be very hard to prove \cite{de2015passive,de2016passive,de2016pq,de2016gaussiannew,de2016gaussian,de2017multimode}.
The quantum conditional Entropy Power Inequality instead definitively settles the problem in the conditioned case.

\begin{acknowledgements}
GdP acknowledges financial support from the European Research Council (ERC Grant Agreements no 337603 and 321029), the Danish Council for Independent Research (Sapere Aude) and VILLUM FONDEN via the QMATH Centre of Excellence (Grant No. 10059).
\end{acknowledgements}

\appendix
\section{Proof of the classical conditional Entropy Power Inequality}\label{app:cCEPI}
We have from the definition of the classical conditional entropy
\begin{equation}\label{eq:C|M}
S(C|M) = \int_M S(C|M=m)\,\mathrm{d}p_M(m)\;,
\end{equation}
where $p_M$ is the probability distribution of $M$.
From the classical Entropy Power Inequality \eqref{eq:cEPI} we have for any $m$
\begin{equation}\label{eq:EPIMm}
S(C|M=m) \ge \frac{k}{2}\ln\left(\eta\exp\frac{2S(A|M=m)}{k} + \left|1-\eta\right|\exp\frac{2S(B|M=m)}{k}\right)\;.
\end{equation}
Finally, since the function
\begin{equation}
\left(a,\,b\right)\mapsto \frac{k}{2}\ln\left(\eta\exp\frac{2a}{k} + \left|1-\eta\right|\exp\frac{2b}{k}\right)\;,\qquad a,\,b\in\mathbb{R}
\end{equation}
is convex, we have from \eqref{eq:C|M}, \eqref{eq:EPIMm} and Jensen's inequality
\begin{align}\label{eq:cCEPIproof}
S(C|M) &\ge \int_M\frac{k}{2}\ln\left(\eta\,\mathrm{e}^{\frac{2}{k}S(A|M=m)} + \left|1-\eta\right|\mathrm{e}^{\frac{2}{k}S(B|M=m)}\right)\mathrm{d}p_M(m)\nonumber\\
& \ge \frac{k}{2}\ln\left(\eta\,\mathrm{e}^{\frac{2}{k}\int_M S(A|M=m)\mathrm{d}p_M(m)} + \left|1-\eta\right|\mathrm{e}^{\frac{2}{k}\int_M S(B|M=m)\mathrm{d}p_M(m)}\right)\nonumber\\
& = \frac{k}{2}\ln\left(\eta\,\mathrm{e}^{\frac{2}{k}S(A|M)} + \left|1-\eta\right|\mathrm{e}^{\frac{2}{k}S(B|M)}\right)\;.
\end{align}
The conditional Entropy Power Inequality \eqref{eq:cCEPI} is saturated iff all the inequalities in \eqref{eq:cCEPIproof} are equalities.
The first inequality is saturated iff, conditioning on any value $m$ of $M$, $A$ and $B$ are independent Gaussian random variables with proportional covariance matrices.
The second inequality is saturated iff $S(A|M=m)-S(B|M=m)$ does not depend on $m$.
For $A$ and $B$ as above, this is equivalent to having the proportionality constant between their covariance matrices independent on $m$.

\section{}\label{app}
\begin{lemma}[\cite{holevo2013quantum}, Lemma 12.25]\label{lem:GS}
Let $\hat{\rho}$ be a quantum state on a Gaussian quantum system with finite average energy, and let $\hat{\gamma}$ be the Gaussian quantum state with the same first and second moments.
Then
\begin{equation}
S(\hat{\gamma}) = S(\hat{\rho}) + S(\hat{\rho}\|\hat{\gamma})\;.
\end{equation}
\end{lemma}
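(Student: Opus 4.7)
The plan is to establish the linear form \eqref{eq:EPIl} by a Blachman--Stam style heat-flow monotonicity argument, and then recover the multiplicative EPI \eqref{eq:EPI} by pointwise Legendre optimization over $\lambda\in[0,1]$: the optimizer is
\begin{equation}
\lambda^\ast = \frac{\eta\,e^{S(A|M)/n}}{\eta\,e^{S(A|M)/n}+|1-\eta|\,e^{S(B|M)/n}},
\end{equation}
at which the RHS of \eqref{eq:EPIl} equals $n$ times the logarithm of the RHS of \eqref{eq:EPI}. So the real work is to prove \eqref{eq:EPIl} for each fixed $\lambda\in[0,1]$.

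\textbf{Step 1: tuned local heat flow preserving conditional independence.} Given $\lambda$, define
\begin{equation}
\hat\rho_{ABM}(t):= \Bigl(\mathcal N_A\bigl(\tfrac{\lambda t}{\eta}\bigr)\otimes\mathcal N_B\bigl(\tfrac{(1-\lambda)t}{|1-\eta|}\bigr)\otimes\mathbb I_M\Bigr)(\hat\rho_{ABM}),
\end{equation}
so that by \autoref{lem:BN} the output simultaneously evolves as $\hat\rho_{CM}(t)=(\mathcal N_C(t)\otimes\mathbb I_M)(\hat\rho_{CM})$. Because local CPTP channels can only decrease $I(A{:}B|M)$ by the data-processing inequality, and it starts at $0$, conditional independence $I(A{:}B|M)_{\hat\rho_{ABM}(t)}=0$ is preserved all along the flow. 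Introduce the candidate monotone functional
\begin{equation}
\phi(t):=S(C|M)_{\hat\rho_{CM}(t)}-\lambda\,S(A|M)_{\hat\rho_{AM}(t)}-(1-\lambda)\,S(B|M)_{\hat\rho_{BM}(t)}.
\end{equation}

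\textbf{Step 2: monotonicity via conditional de Bruijn and conditional Stam.} By \autoref{thm:conc} each conditional entropy is continuous and concave in time along its own heat flow, so $\phi$ is a sum of continuous concave functions (up to signs), hence a.e.\ differentiable. At any time of differentiability, the chain rule together with the quantum conditional de Bruijn identity \autoref{prop:ddB} gives
\begin{equation}
\phi'(t)=J(C|M)_{\hat\rho_{CM}(t)}-\tfrac{\lambda^2}{\eta}\,J(A|M)_{\hat\rho_{AM}(t)}-\tfrac{(1-\lambda)^2}{|1-\eta|}\,J(B|M)_{\hat\rho_{BM}(t)}.
\end{equation}
Applying the linear quantum conditional Stam inequality \autoref{thm:Stam} (with the same $\lambda$) at the conditionally-independent state $\hat\rho_{ABM}(t)$ gives $\phi'(t)\le 0$. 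Integrating yields $\phi(0)\ge \phi(T)$ for every $T\ge 0$.

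\textbf{Step 3: asymptotic value of $\phi$ and conclusion.} The universal asymptotic scaling \autoref{thm:scaling}, applied separately to each marginal evolved at its own effective rate, gives $S(X|M)_{\hat\rho_{XM}(T)}=n\ln t_X+n+o(1)$ with $t_A=\lambda T/\eta$, $t_B=(1-\lambda)T/|1-\eta|$, $t_C=T$. The coefficients $1,-\lambda,-(1-\lambda)$ sum to zero, so the leading $n\ln T$ terms cancel and
\begin{equation}
\lim_{T\to\infty}\phi(T)=n\Bigl[\lambda\ln\tfrac{\eta}{\lambda}+(1-\lambda)\ln\tfrac{|1-\eta|}{1-\lambda}\Bigr].
\end{equation}
Combining with $\phi(0)\ge\lim_{T\to\infty}\phi(T)$ yields \eqref{eq:EPIl}; dividing by $n$ and maximizing over $\lambda$ as above gives \eqref{eq:EPI}.

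\textbf{Main obstacle.} The delicate step is the rigorous transition in Step 2 from the a.e.\ pointwise inequality $\phi'\le 0$ to the integrated bound $\phi(0)\ge\phi(T)$: \autoref{thm:conc} only guarantees concavity and continuity, not $C^1$ regularity, so derivatives must be handled via the one-sided derivatives of concave functions (which exist everywhere) and the absolute continuity of concave functions, writing $\phi(0)-\phi(T)=\int_0^T(-\phi'(t))\,dt\ge 0$ as a Lebesgue integral of a nonnegative a.e.-defined function. A secondary care is needed at the boundary $\lambda\in\{0,1\}$ or $\eta\in\{0,1\}$, where one rate vanishes or diverges: these follow by continuity from the strict-interior case, where the finite-energy and finite-entropy hypotheses are exactly what is needed to legally invoke \autoref{thm:conc} and \autoref{thm:scaling}.
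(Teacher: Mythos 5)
Your proposal does not prove the statement it was asked to prove. The statement is Lemma~\ref{lem:GS}: for a state $\hat{\rho}$ with finite average energy and the Gaussian state $\hat{\gamma}$ with the same first and second moments, $S(\hat{\gamma}) = S(\hat{\rho}) + S(\hat{\rho}\|\hat{\gamma})$. What you have written is instead an outline of the proof of the quantum conditional Entropy Power Inequality (Theorem~\ref{thm:EPI}) — heat flow, conditional de Bruijn, conditional Stam, asymptotic scaling, Legendre optimization over $\lambda$. None of that machinery is relevant to Lemma~\ref{lem:GS}, which is an elementary identity about a single state and its Gaussification, cited in the paper from \cite{holevo2013quantum} (Lemma 12.25) and left without proof.

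The argument you should have given is short: write
\begin{equation}
S(\hat{\rho}\|\hat{\gamma}) = \mathrm{Tr}\left[\hat{\rho}\ln\hat{\rho}\right] - \mathrm{Tr}\left[\hat{\rho}\ln\hat{\gamma}\right] = -S(\hat{\rho}) - \mathrm{Tr}\left[\hat{\rho}\ln\hat{\gamma}\right]\;,
\end{equation}
and observe that $\ln\hat{\gamma}$ is, up to an additive constant, a quadratic polynomial in the quadratures $\hat{R}^i$. Therefore $\mathrm{Tr}\left[\hat{\rho}\ln\hat{\gamma}\right]$ depends on $\hat{\rho}$ only through its first and second moments, which by hypothesis coincide with those of $\hat{\gamma}$; hence $\mathrm{Tr}\left[\hat{\rho}\ln\hat{\gamma}\right] = \mathrm{Tr}\left[\hat{\gamma}\ln\hat{\gamma}\right] = -S(\hat{\gamma})$, and the identity follows. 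The finite-average-energy hypothesis is what guarantees that these traces are finite so the rearrangement is legitimate. Please resubmit a proof of the correct statement.
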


\begin{lemma}\label{lem:Topt}
Let $\hat{\rho}$ be a quantum state on a Gaussian quantum system with finite average energy, and let $\hat{\omega}$ be the thermal Gaussian quantum state with the same average energy.
Then
\begin{equation}
S(\hat{\omega}) \ge S(\hat{\rho})\;.
\end{equation}
\end{lemma}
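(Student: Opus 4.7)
The plan is to deduce the lemma directly from the non-negativity of the quantum relative entropy $S(\hat{\rho}\|\hat{\omega}) \ge 0$, exploiting the Gibbsian form of $\hat{\omega}$. Writing $\hat{\omega} = e^{-\beta\hat{H}}/\mathrm{Tr}\,e^{-\beta\hat{H}}$ for the appropriate inverse temperature $\beta>0$, one has $\ln\hat{\omega} = -\beta\hat{H} - (\ln\mathrm{Tr}\,e^{-\beta\hat{H}})\,\hat{\mathbb{I}}$, so that
\begin{equation*}
\mathrm{Tr}[\hat{\rho}\ln\hat{\omega}] = -\beta\,\mathrm{Tr}[\hat{H}\hat{\rho}] - \ln\mathrm{Tr}\,e^{-\beta\hat{H}}\;.
\end{equation*}
The crucial observation is that this quantity depends on $\hat{\rho}$ only through its average energy $\mathrm{Tr}[\hat{H}\hat{\rho}]$, which by hypothesis equals $\mathrm{Tr}[\hat{H}\hat{\omega}]$. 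Hence $\mathrm{Tr}[\hat{\rho}\ln\hat{\omega}] = \mathrm{Tr}[\hat{\omega}\ln\hat{\omega}] = -S(\hat{\omega})$.

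Expanding the relative entropy as $S(\hat{\rho}\|\hat{\omega}) = -S(\hat{\rho}) - \mathrm{Tr}[\hat{\rho}\ln\hat{\omega}] = S(\hat{\omega}) - S(\hat{\rho})$ and invoking positivity of the relative entropy yields the claim $S(\hat{\omega})\ge S(\hat{\rho})$.

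The only point that requires some care is the well-posedness of $\mathrm{Tr}[\hat{\rho}\ln\hat{\omega}]$ and $S(\hat{\rho}\|\hat{\omega})$ in infinite dimension. Since $\ln\hat{\omega}$ is affine in $\hat{H}$ and $\mathrm{Tr}[\hat{H}\hat{\rho}]$ is finite by hypothesis, $\mathrm{Tr}[\hat{\rho}\ln\hat{\omega}]$ is finite; the relative entropy $S(\hat{\rho}\|\hat{\omega})$ can then be defined in the standard extended sense (e.g.\ as in \cite{holevo2013quantum}) and its non-negativity applies. In particular, the computation already implies $S(\hat{\rho}) \le S(\hat{\omega}) < \infty$, since thermal Gaussian states of finite mean energy have finite von Neumann entropy, so no divergence occurs. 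I do not expect any substantial obstacle here; the argument is the standard Gibbs variational principle adapted to a Gaussian quantum setting.
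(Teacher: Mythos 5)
Your proposal is correct and is essentially the paper's own proof: the paper likewise picks $\beta$ so that $\hat{\omega}$ is the Gibbs state, writes $S(\hat{\omega}) = S(\hat{\rho}) + S(\hat{\rho}\|\hat{\omega}) + \beta\,\mathrm{Tr}[\hat{H}(\hat{\omega}-\hat{\rho})]$, and concludes from the non-negativity of the relative entropy together with the equality of average energies. Your version merely substitutes the equal-energy hypothesis before expanding, and your remarks on well-posedness in infinite dimension are a harmless (and reasonable) addition.
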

\begin{proof}
Let $\beta>0$ be such that
\begin{equation}
\hat{\omega} = \frac{\mathrm{e}^{-\beta\hat{H}}}{\mathrm{Tr}\,\mathrm{e}^{-\beta\hat{H}}}\;.
\end{equation}
We then have
\begin{equation}
S(\hat{\omega}) = S(\hat{\rho}) + S(\hat{\rho}\|\hat{\omega}) + \beta\,\mathrm{Tr}\left[\hat{H}\left(\hat{\omega}-\hat{\rho}\right)\right] \ge S(\hat{\rho})\;.
\end{equation}
\end{proof}

\begin{lemma}\label{lem:diff}
Let $A$ and $B$ be Gaussian quantum systems with $m$ and $n$ modes, respectively, and $\Phi:A\to B$ a Gaussian quantum channel.
Let $\hat{\rho}_A$ be a quantum state on $A$, and $p_X$ a probability measure on $\mathbb{R}^{2m}$.
We define
\begin{equation}
\hat{\sigma}_A := \int_{\mathbb{R}^{2m}}\hat{D}_A(\mathbf{x})\,\hat{\rho}_A\,{\hat{D}_A(\mathbf{x})}^\dag\,\mathrm{d}p_X(\mathbf{x})\;.
\end{equation}
Then,
\begin{equation}
S(\Phi(\hat{\rho}_A)) - S(\hat{\rho}_A) \ge S(\Phi(\hat{\sigma}_A)) - S(\hat{\sigma}_A)\;.
\end{equation}
\end{lemma}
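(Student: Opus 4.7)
The plan is to promote the classical register $X$ carrying the displacement to an explicit auxiliary system and recognize each side of the inequality as a quantum mutual information, after which the result reduces to the data-processing inequality.

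More concretely, I would first introduce the classical-quantum state $\hat{\sigma}_{AX}$ on $A$ together with a classical register $X$ distributed according to $p_X$, such that
\begin{equation}
\hat{\sigma}_{A|X=\mathbf{x}} = \hat{D}_A(\mathbf{x})\,\hat{\rho}_A\,{\hat{D}_A(\mathbf{x})}^\dag\;.
\end{equation}
The marginal on $A$ is precisely $\hat{\sigma}_A$ as defined. Since displacements are unitary, $S(\hat{\sigma}_{A|X=\mathbf{x}}) = S(\hat{\rho}_A)$ for every $\mathbf{x}$, hence $S(A|X)_{\hat{\sigma}_{AX}} = S(\hat{\rho}_A)$ and
\begin{equation}
I(A:X)_{\hat{\sigma}_{AX}} = S(\hat{\sigma}_A) - S(\hat{\rho}_A)\;.
\end{equation}

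Next I would exploit the fact that a Gaussian channel $\Phi$ intertwines displacements: there exists a linear map $K:\mathbb{R}^{2m}\to\mathbb{R}^{2n}$ such that
\begin{equation}
\Phi\bigl(\hat{D}_A(\mathbf{x})\,\hat{\rho}_A\,{\hat{D}_A(\mathbf{x})}^\dag\bigr) = \hat{D}_B(K\mathbf{x})\,\Phi(\hat{\rho}_A)\,{\hat{D}_B(K\mathbf{x})}^\dag
\end{equation}
for every $\mathbf{x}\in\mathbb{R}^{2m}$ (as already used in the proof of \autoref{thm:liminf}). Applying $\Phi\otimes\mathbb{I}_X$ to $\hat{\sigma}_{AX}$ therefore gives a state whose $B$-marginal is $\Phi(\hat{\sigma}_A)$, and whose conditional entropy satisfies $S(B|X) = S(\Phi(\hat{\rho}_A))$ since each $\hat{D}_B(K\mathbf{x})\,\Phi(\hat{\rho}_A)\,{\hat{D}_B(K\mathbf{x})}^\dag$ has the same entropy as $\Phi(\hat{\rho}_A)$. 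Hence
\begin{equation}
I(B:X)_{(\Phi\otimes\mathbb{I}_X)(\hat{\sigma}_{AX})} = S(\Phi(\hat{\sigma}_A)) - S(\Phi(\hat{\rho}_A))\;.
\end{equation}

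Finally, the data-processing inequality for the quantum mutual information applied to the channel $\Phi$ on the $A$-system yields
\begin{equation}
I(B:X)_{(\Phi\otimes\mathbb{I}_X)(\hat{\sigma}_{AX})} \le I(A:X)_{\hat{\sigma}_{AX}}\;,
\end{equation}
which, combined with the two identities above, is precisely the claim after rearrangement. The only potential subtlety is the well-definedness of the various entropies and mutual informations when $p_X$ or $\hat{\rho}_A$ have unbounded moments; this can be handled either by requiring finiteness of $S(\hat{\sigma}_A)$ (which is in fact all that matters for the lemma to be non-vacuous, since otherwise both sides can be read as $-\infty$) or, if needed, by a truncation and monotone-limit argument using the lower semicontinuity of the relative entropy. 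The displacement-intertwining property of Gaussian channels is the only substantive input; everything else is standard.
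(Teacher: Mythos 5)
Your proposal is correct and follows essentially the same route as the paper: the paper's proof also introduces the classical-quantum state $\hat{\sigma}_{AX}$, identifies both entropy differences as the mutual informations $I(A:X)$ and $I(B:X)$ via the unitary invariance of entropy under displacements and the displacement-covariance of the Gaussian channel, and concludes by data processing. Your extra remark on regularity is a reasonable addition but not part of the paper's argument.
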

\begin{proof}
Let $\hat{\sigma}_{AX}$ be the joint state on $AX$ such that its marginal on $X$ is $p_X$, and for any $\mathbf{x}\in\mathbb{R}^{2m}$
\begin{equation}
\hat{\sigma}_{A|X=\mathbf{x}} = \hat{D}_A(\mathbf{x})\,\hat{\rho}_A\,{\hat{D}_A(\mathbf{x})}^\dag\;.
\end{equation}
We notice that the marginal of $\hat{\sigma}_{AX}$ on $A$ is $\hat{\sigma}_A$, and
\begin{align}
S(A|X)_{\hat{\sigma}_{AX}} &= \int_{\mathbb{R}^{2m}}S\left(\hat{\sigma}_{A|X=\mathbf{x}}\right)\mathrm{d}p_X(\mathbf{x}) = S(\hat{\rho}_A)\;,\\
S(B|X)_{(\Phi\otimes\mathbb{I}_X)(\hat{\sigma}_{AX})} &= \int_{\mathbb{R}^{2m}}S\left(\Phi(\hat{\sigma}_{A|X=\mathbf{x}})\right)\mathrm{d}p_X(\mathbf{x}) = S(\Phi(\hat{\rho}_A))\;,
\end{align}
where in the last step we used that for any $\mathbf{x}\in\mathbb{R}^{2m}$
\begin{equation}
\Phi\left(\hat{D}_A(\mathbf{x})\,\hat{\rho}_A\,{\hat{D}_A(\mathbf{x})}^\dag\right) = \hat{D}_B(K\mathbf{x})\,\Phi(\hat{\rho}_A)\,{\hat{D}_B(K\mathbf{x})}^\dag
\end{equation}
for some matrix $K:\mathbb{R}^{2m}\to\mathbb{R}^{2n}$.
We then have
\begin{align}
S(\hat{\sigma}_A) - S(\hat{\rho}_A) &= I(A:X)_{\hat{\sigma}_{AX}}\;,\\
S(\Phi(\hat{\sigma}_A)) - S(\Phi(\hat{\rho}_A)) &= I(B:X)_{(\Phi\otimes\mathbb{I}_X)(\hat{\sigma}_{AX})}\;,
\end{align}
and the claim follows from the data-processing inequality for the mutual information.
\end{proof}

\bibliography{biblio}
\bibliographystyle{plain}

\end{document}